\documentclass[sigconf,nonacm]{aamas}
\AtBeginDocument{%
  \fancypagestyle{firstpagestyle}{\fancyhf{}\fancyfoot[C]{\footnotesize\thepage}}%
  \fancypagestyle{standardpagestyle}{\fancyhf{}\fancyfoot[C]{\footnotesize\thepage}}%
  \pagestyle{standardpagestyle}}
\usepackage{balance}
\usepackage{booktabs}
\usepackage{amsmath,amsthm}
\usepackage{xspace}

\title{Envy-Free Decompositions of Random Assignments:\\ Settling Four Agents, and What Lies Beyond}

\author{Keyi Li}
\affiliation{
  \institution{Software College, Northeastern University}
  \city{Shenyang}
  \country{China}}
\email{likeyiwh@gmail.com}
\author{Yihao He}
\affiliation{
  \institution{Software College, Northeastern University}
  \city{Shenyang}
  \country{China}}
\email{hg221008@163.com}
\author{Quanyi Li}
\affiliation{
  \institution{Software College, Northeastern University}
  \city{Shenyang}
  \country{China}}
\email{lqy18540034783@163.com}

\newcommand{\N}{N}
\newcommand{\Ob}{O}
\newcommand{\Perm}{\Pi}
\newcommand{\sdef}{\mathcal{F}}
\newcommand{\tstar}{t^{*}}
\newcommand{\decef}{Dec-EF\xspace}

\newcommand{\env}{\mathrm{env}}
\newcommand{\rk}{\mathrm{rk}}

\newtheorem{conjecture}{Conjecture}
\newtheorem{observation}{Observation}

\begin{abstract}
A random assignment of $n$ indivisible objects to $n$ agents is specified by its assignment
matrix and implemented by drawing a deterministic assignment from a Birkhoff--von~Neumann
decomposition. Kawase, Suksompong, Sumita and Yokoi observed that the choice of decomposition
matters for fairness: a matrix that is envy-free in the sense of stochastic dominance (SD-EF) can
be decomposed so that some agent envies another with probability close to $1$. They call a
decomposition \emph{decomposition envy-free} (\decef) if every agent envies every other agent with
probability at most $1/2$, proved that every SD-EF matrix admits a \decef decomposition when
$n\le 3$ or when there are at most two distinct preferences, and left the general case open. We
settle the first open case: \emph{every SD-EF matrix with four agents admits a \decef
decomposition}. The worst case over the SD-EF polytope of a profile is attained at a vertex, and
our computer-aided proof enumerates all $26{,}927$ vertices for the $762$ profiles up to symmetry
in exact arithmetic and certifies each by a rational decomposition. The same method settles five
agents with at most four distinct preferences and the probabilistic serial rule for all five-agent
profiles, and adversarial search up to seven agents finds no counterexample. For general $n$, an
envy-budget identity shows that $1/2$ is the best possible threshold. We prove that every SD-EF
matrix with at most two distinct rows admits a \decef decomposition, and that the maximum-entropy
decomposition is \decef whenever all agents but two share a preference; the latter proof rests on a
new monotonicity lemma for weighted least-squares rankings. In general, natural decompositions
fail: greedy Birkhoff--von~Neumann can come arbitrarily close to envy probability $(n-1)/n$, and
maximum entropy fails at $n=4$ when all preferences differ. Deciding whether an arbitrary random
assignment, not necessarily SD-EF, admits a \decef decomposition is strongly NP-complete.
\end{abstract}

\keywords{Random assignment; envy-freeness; Birkhoff--von Neumann decomposition; ex-post fairness; computer-aided proofs}

\begin{document}
\maketitle

\section{Introduction}

Randomization is the standard remedy for the unfairness of indivisible allocation. When $n$
objects (dormitory rooms, school seats, shifts, time slots) are to be assigned to $n$ agents with
ordinal preferences, mechanisms such as \emph{random serial dictatorship} (RSD; also called
random priority)~\cite{AS98} and the \emph{probabilistic serial} rule (PS)~\cite{BM01} output a
\emph{random assignment}. A random assignment is typically described by its \emph{assignment
matrix} $P$, where $P_{io}$ is the probability that agent $i$ receives object $o$. By the
Birkhoff--von~Neumann theorem~\cite{Bir46,vN53}, every such (bistochastic) matrix is a convex
combination of permutation matrices, and to implement $P$ one fixes such a
\emph{decomposition} and draws a deterministic assignment from it. Fairness notions for random
assignments, most prominently \emph{envy-freeness with respect to stochastic dominance}
(SD-EF)~\cite{BM01}, are properties of the matrix $P$ alone. Two decompositions of the same
matrix therefore look equally fair ex ante, yet they can produce very different
outcomes ex post.

\citet{KSSY26} made this tension precise. Consider three agents who share the preference
$a\succ b\succ c$ and the uniform matrix, which is SD-EF and is the output of both RSD and PS. The
decomposition into the three cyclic shifts makes agent~2 envy agent~1 with probability $2/3$,
whereas the uniform distribution over all six assignments keeps every pairwise envy probability at
$1/2$. They proposed \emph{decomposition envy-freeness} (\decef): a decomposition is \decef if for
every ordered pair of agents $(i,j)$, the probability that $i$ envies $j$ in the realized
assignment is at most $1/2$. The threshold $1/2$ is natural: two agents with identical preferences
cannot both envy each other with probability below $1/2$. A matrix is \emph{EF-decomposable} if it
admits a \decef decomposition. \citet{KSSY26} showed that every SD-EF matrix is EF-decomposable
when $n\le 3$ or when agents have at most two distinct preferences. They noted that their
techniques do not extend, and left open whether every SD-EF matrix is EF-decomposable. They also
reported a computer check that the PS matrix is EF-decomposable for every instance with four
agents.

The difficulty lies in a mismatch between what SD-EF constrains and what envy depends on. SD-EF
compares the rows of $P$, that is, the lotteries of individual agents. Whether $i$ envies $j$ in
the realized assignment depends on the joint law of their two objects, which $P$ leaves open, and
a decomposition fixes the joint laws of all pairs of agents at once. Natural decompositions do not
look at preferences: greedy Birkhoff--von~Neumann peels off heavy permutations, and the
maximum-entropy decomposition spreads probability as evenly as possible. We show that neither is
\decef in general. Our positive results therefore either optimize the decomposition for the given
profile or identify structure under which a canonical decomposition provably works.

\paragraph{Our contributions} Table~\ref{tab:overview} lists the cases that are now settled.
\begin{itemize}
\item The smallest achievable maximum envy probability $\tstar(P)$ is convex in $P$. For each
profile, it therefore suffices to check the vertices of the SD-EF polytope, with one LP per
vertex (Section~\ref{sec:structure}). An \emph{envy-budget identity} shows that in every
decomposition, each agent's envy probabilities toward the others sum to its expected rank, which
SD-EF caps at $(n-1)/2$. Hence \decef holds on average, $1/2$ is the smallest threshold that can be
guaranteed for any profile, and a violation requires envy to be concentrated on some pairs at the
expense of others.
\item We settle the first open case: every SD-EF matrix with four agents is EF-decomposable
(Section~\ref{sec:computation}). The proof enumerates the $26{,}927$ vertices of the SD-EF
polytopes of all $762$ profiles up to symmetry in exact arithmetic and certifies each by a
rational decomposition, which a short independent script verifies. A type-reduced version of the
same pipeline settles five agents with at most four preference types ($216$ million vertices) and
six agents with three pairs of identical preferences, and it certifies the PS matrix of every
five-agent profile in exact arithmetic. Adversarial search up to seven agents finds no counterexample, and we
conjecture that every SD-EF matrix is EF-decomposable.
\item For every $n$, we prove the conjecture for two classes (Section~\ref{sec:general}).
Symmetrizing an arbitrary decomposition works for matrices with at most two distinct rows, even
when all preferences differ. When all agents but two share a preference, the maximum-entropy
decomposition is \decef. This is the first class with three preference types that is settled for
every $n$, and its proof rests on a new monotonicity lemma for weighted least-squares rankings.
\item Natural decompositions fail (Section~\ref{sec:canonical}). Greedy Birkhoff--von~Neumann can
come arbitrarily close to envy probability $(n-1)/n$, the worst possible value, on matrices whose
optimum is close to $1/2$. The maximum-entropy decomposition fails at $n=4$ when all four
preferences differ, and dependent rounding fails in experiments. Mixing with the uniform matrix gives structure-dependent approximation guarantees.
\item Deciding EF-decomposability of an arbitrary random assignment, not necessarily SD-EF, is
strongly NP-complete, and
approximating $\tstar$ within an additive $1/\mathrm{poly}(n)$ is NP-hard
(Section~\ref{sec:complexity}). Minimizing a weighted sum of envy probabilities is NP-hard even for
the uniform matrix, which is SD-EF for every profile.
\end{itemize}

\begin{table}[t]
\caption{Classes of SD-EF matrices known to be EF-decomposable, and the decomposition that
achieves \decef. ``LP'' denotes a decomposition optimized by linear programming and certified
in exact arithmetic. The smallest open case is $n=5$ with five distinct preferences.}
\label{tab:overview}
\small
\begin{tabular}{@{}lll@{}}
\toprule
case & result & decomposition\\
\midrule
$n\le3$ & \cite{KSSY26}; Cor.~\ref{cor:maxent3} & maximum entropy\\
at most two preference types & \cite{KSSY26}; Thm.~\ref{thm:tworows} & symmetrized\\
at most two distinct rows & Thm.~\ref{thm:tworows} & symmetrized\\
all agents but two alike & Thm.~\ref{thm:twoplus} & maximum entropy\\
$n=4$ & Thm.~\ref{thm:n4} & LP\\
$n=5$, at most four types & Thm.~\ref{thm:n5types} & LP\\
$n=6$, pattern $(2,2,2)$ & Thm.~\ref{thm:n6types} & LP\\
PS matrices, $n=5$ & Thm.~\ref{thm:ps5} & LP\\
\bottomrule
\end{tabular}
\end{table}

\subsection{Related Work}

RSD~\cite{AS98} and PS~\cite{BM01} are the central mechanisms for random assignment with ordinal
preferences, and \citet{BM01} introduced SD-EF. Implementing a random assignment as a lottery over
deterministic ones also underlies mechanism design under constraints~\cite{BCKM13}. The choice of
decomposition has mostly been studied for its size: finding a Birkhoff--von~Neumann decomposition
with minimum support is NP-hard~\cite{DU16,KLS17}. Closer to our complexity results, deciding
whether a random assignment has a decomposition into Pareto-optimal assignments is
NP-complete~\cite{AMXY15}. Dependent rounding~\cite{GKPS06} is the standard tool for implementing
marginals with correlation guarantees.

\decef belongs to the line of work on \emph{best-of-both-worlds} fairness, which asks for
lotteries that are fair ex ante and whose realizations are (approximately) fair
ex post~\cite{AFSV24,BEF22,HSV23,FMNP24}. Most of that line works with additive cardinal
valuations and relaxations such as EF1. With unit demand, EF1 holds in every assignment, so a
meaningful ex-post guarantee has to be probabilistic. \emph{Interim} envy-freeness~\cite{CKK25},
which constrains the conditional expectation of the other agent's bundle given one's own, also
restricts pairwise joint distributions of the lottery and leads to LPs over perfect matchings. For
cardinal utilities, RSD is $\sqrt2$-envy-free in the ratio sense~\cite{CDLN+26}. This guarantee is
ex ante. \decef, in contrast, concerns the realized assignment and depends on the decomposition.

Our methodology follows the computer-aided tradition in social choice~\cite{GP17,BBEG18}, where
impossibility theorems are proved by SAT or SMT solving. Our problem is continuous: the finite
reduction comes from polyhedral convexity, and each certificate is a rational decomposition that is
checked in exact arithmetic.

\section{Preliminaries}\label{sec:prelim}

Let $\N=[n]$ be a set of agents and $\Ob$ a set of $n$ objects. Each agent $i$ has a strict
preference $\succ_i$ over $\Ob$, and $R=(\succ_i)_{i\in\N}$ is the profile. We write
$\rk_i(o)\in\{0,\dots,n-1\}$ for the number of objects $i$ strictly prefers to $o$ (so $i$'s top
object has rank $0$), and $U_i^k$ for the set of $i$'s $k$ most preferred objects. The distinct
preferences in $R$ are its \emph{types}, and its \emph{multiplicity pattern} lists how many agents
hold each type. For instance, $(2,1,1,1)$ describes five agents with four types. A
\emph{deterministic assignment} is a bijection $\sigma:\N\to\Ob$; the set of all of them is
$\Perm$, and $Q^\sigma$ denotes the permutation matrix of $\sigma$. An \emph{assignment matrix} is
a bistochastic matrix $P\in\mathbb{R}^{\N\times \Ob}_{\ge 0}$, and its row $P_i$ is agent $i$'s
random allocation. $J$ is the all-ones matrix, so $J/n$ is the uniform matrix.

\paragraph{SD-EF} $P$ is \emph{SD-envy-free} if for all $i,j\in\N$ and all $k\in[n]$,
$\sum_{o\in U_i^k}P_{io}\ \ge\ \sum_{o\in U_i^k}P_{jo}$. That is, each agent's allocation
stochastically dominates every other agent's with respect to its own preference. For a fixed
profile, the set $\sdef(R)$ of SD-EF matrices is a polytope, cut out of the Birkhoff polytope by
$n(n-1)(n-1)$ linear inequalities. It is non-empty: it contains $J/n$.

\paragraph{Decompositions and Dec-EF} A \emph{decomposition} of $P$ is a probability
distribution $\pi$ over $\Perm$ with $\sum_\sigma \pi(\sigma)Q^\sigma = P$. Agent $i$
\emph{envies} $j$ in $\sigma$ if $\sigma(j)\succ_i\sigma(i)$, and we write
$\env_{ij}(\sigma)\in\{0,1\}$ for the indicator. The envy probability is
$e_{ij}(\pi)=\Pr_{\sigma\sim\pi}[\sigma(j)\succ_i\sigma(i)]$.
\begin{definition}[\citet{KSSY26}]
A decomposition $\pi$ is \emph{$\alpha$-\decef} if $e_{ij}(\pi)\le\alpha$ for all ordered pairs
$i\ne j$, and \emph{\decef} if it is $\tfrac12$-\decef. A matrix is \emph{EF-decomposable} if it
admits a \decef decomposition.
\end{definition}
We study the optimization version
\[
\tstar(P)\;=\;\min_{\pi \text{ decomposes } P}\ \max_{i\neq j}\ e_{ij}(\pi),
\qquad
\lambda(R)\;=\;\max_{P\in\sdef(R)}\tstar(P).
\]
The minimum exists because the set of decompositions of $P$ is a non-empty polytope in
$\mathbb{R}^{\Perm}$. $P$ is EF-decomposable iff $\tstar(P)\le 1/2$. \citet{KSSY26} proved
$\lambda(R)\le1/2$ for $n\le3$ and for profiles with at most two types, and asked whether
$\lambda(R)\le 1/2$ always holds. In general, $\lambda(R)\le (n-1)/n$, and some decompositions
attain this bound~\cite{KSSY26}.

\paragraph{Two natural decompositions} The \emph{greedy} Birkhoff--von~Neumann decomposition
repeatedly takes a permutation $\sigma$ that maximizes the bottleneck $\min_iM_{i\sigma(i)}$ of the
remaining matrix $M$ (initially $P$), gives it this bottleneck as its weight, and subtracts the
weighted $Q^\sigma$ from $M$. The \emph{maximum-entropy} decomposition is the decomposition that
maximizes $-\sum_\sigma\pi(\sigma)\log\pi(\sigma)$. It is unique because entropy is strictly
concave. Both depend on $P$ alone.

\section{Structural Results}\label{sec:structure}

Two facts drive the rest of the paper: $\tstar$ is convex, which reduces each profile to finitely
many LPs, and envy obeys a budget identity, which pins down the threshold $1/2$.

\begin{lemma}[Convexity and vertex reduction]\label{lem:vertex}
$\tstar$ is convex and piecewise linear on the Birkhoff polytope. Consequently, for every profile
$R$, $\lambda(R)=\max\{\tstar(P): P \text{ a vertex of } \sdef(R)\}$, and every SD-EF matrix of $R$ is
EF-decomposable iff every vertex of $\sdef(R)$ is.
\end{lemma}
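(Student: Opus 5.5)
We will write $\tstar$ as the optimal value of a parametric linear program in which $P$ enters only the right-hand side. Concretely, $\tstar(P)$ is the minimum of $t$ over pairs $(\pi,t)$ with $\pi\in\mathbb{R}^{\Perm}_{\ge0}$, subject to
\[
\sum_\sigma \pi(\sigma)Q^\sigma=P,\qquad \sum_\sigma \pi(\sigma)\env_{ij}(\sigma)\le t\quad\text{for all } i\ne j .
\]
The normalization $\sum_\sigma\pi(\sigma)=1$ is implied by the first constraint, since every row of $P$ sums to $1$. The coefficients $Q^\sigma$ and $\env_{ij}(\sigma)$ do not depend on $P$, which enters linearly only through the right-hand side.

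\textbf{Convexity.} Take $P,P'$ in the Birkhoff polytope and $\theta\in[0,1]$. Let $(\pi,t)$ and $(\pi',t')$ be optimal for $P$ and $P'$; they exist by the remark after the definition of $\tstar$. Then $\theta(\pi,t)+(1-\theta)(\pi',t')$ is feasible for $\theta P+(1-\theta)P'$, because all constraints are linear in $(\pi,t,P)$ jointly. Hence
\[
\tstar(\theta P+(1-\theta)P')\le\theta\,\tstar(P)+(1-\theta)\,\tstar(P').
\]

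\textbf{Piecewise linearity.} We pass to the LP dual. By strong duality, which applies because the primal is feasible (Birkhoff--von Neumann) and bounded below by $0$, $\tstar(P)$ equals the maximum of $\langle Y,P\rangle$ over the dual feasible region. The dual variables are $Y\in\mathbb{R}^{\N\times\Ob}$ and $w\ge0$ with $\sum_{i\ne j}w_{ij}=1$, subject to
\[
\sum_i Y_{i\sigma(i)}\le\sum_{i\ne j}w_{ij}\env_{ij}(\sigma)\qquad\text{for all }\sigma\in\Perm .
\]
This region does not depend on $P$. The one technical point is that it is unbounded, since adding $a_i$ to row $i$ and $-a_i$ to the columns preserves the constraints. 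So we need that the maximum over it is attained at one of finitely many points. We handle this by noting that the dual polyhedron has a lineality space, and that $\langle Y,P\rangle$ is constant along that space for bistochastic $P$. Modulo the lineality space, the maximum is attained at one of the finitely many vertices. Therefore $\tstar$ is a maximum of finitely many linear functions of $P$ on the Birkhoff polytope, which makes it convex and piecewise linear. This reproves convexity as well.

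\textbf{Vertex reduction.} $\sdef(R)$ is a polytope, bounded because it lies inside the Birkhoff polytope, so it is the convex hull of its finitely many vertices. A convex function on a polytope attains its maximum at a vertex: write $P=\sum_k\mu_kV_k$ and use convexity to get $\tstar(P)\le\sum_k\mu_k\tstar(V_k)\le\max_k\tstar(V_k)$. The reverse inequality holds because the vertices themselves belong to $\sdef(R)$. This gives $\lambda(R)=\max_V\tstar(V)$ over the vertices $V$ of $\sdef(R)$. The final equivalence follows immediately, since EF-decomposability of $P$ means exactly $\tstar(P)\le1/2$.

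The main obstacle is the rigorous justification of piecewise linearity through the unbounded dual. If that argument becomes messy, a fallback is the standard fact that the optimal value of an LP as a function of its right-hand side is piecewise linear on its domain. That fact follows from finitely many optimal bases: each basis gives a linear formula valid on a polyhedral cone, and these cones cover the Birkhoff polytope.
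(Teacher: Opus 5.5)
Your proposal is correct and follows the paper's proof. Convexity comes from mixing optimal decompositions, the vertex reduction comes from convexity of $\tstar$ over the polytope $\sdef(R)$, and the final equivalence is just $\tstar(P)\le\frac12$. For piecewise linearity the paper only cites the standard fact that an LP's optimal value is piecewise linear in its right-hand side, which is your fallback; your dual argument is a valid, more explicit form of the same fact and is essentially the dual formulation the paper records in Lemma~\ref{lem:dual}, giving $\tstar$ as a maximum of finitely many linear functions of $P$.
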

\begin{proof}
Let $\pi,\pi'$ be optimal decompositions of $P,P'$ and $\theta\in[0,1]$. Then
$\theta\pi+(1-\theta)\pi'$ decomposes $\theta P+(1-\theta)P'$. Each $e_{ij}$ is linear in the
decomposition, so
$\tstar(\theta P+(1-\theta)P')\le\max_{i\ne j}\big(\theta e_{ij}(\pi)+(1-\theta)e_{ij}(\pi')\big)
\le\theta\tstar(P)+(1-\theta)\tstar(P')$. Piecewise linearity holds because $\tstar(P)$ is the
optimal value of an LP whose right-hand side is linear in $P$. A convex function attains its
maximum over a polytope at a vertex. For the last claim, the set
$\{P:\tstar(P)\le\frac12\}$ is convex.
\end{proof}

LP duality also gives a weighted characterization. We use it for adversarial search in
Section~\ref{sec:computation}, and it connects $\tstar$ to the weighted problem of
Section~\ref{sec:complexity}.

\begin{lemma}[Weighted form]\label{lem:dual}
For every bistochastic $P$,
$\tstar(P)=\max_{\mu}\min_{\pi}\sum_{i\neq j}\mu_{ij}\,e_{ij}(\pi)$, where $\mu$ ranges over probability
distributions on ordered pairs and $\pi$ over decompositions of $P$. Equivalently,
$\tstar(P)=\max\{\langle Y,P\rangle : \mu\in\Delta,\ \sum_{i}Y_{i\sigma(i)}\le \sum_{i\ne j}\mu_{ij}\env_{ij}(\sigma)\ \forall\sigma\in\Perm\}$.
\end{lemma}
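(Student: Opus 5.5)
The first identity is a minimax statement, and I would prove it with von Neumann's minimax theorem. Let $D(P)\subseteq\mathbb{R}^{\Perm}$ be the set of decompositions of $P$. It is a non-empty compact convex polytope. Let $\Delta$ be the simplex of distributions on ordered pairs $i\ne j$. The payoff $f(\mu,\pi)=\sum_{i\ne j}\mu_{ij}e_{ij}(\pi)$ is bilinear, because each $e_{ij}(\pi)=\sum_\sigma\pi(\sigma)\env_{ij}(\sigma)$ is linear in $\pi$. Since $f(\cdot,\pi)$ is linear in $\mu$, maximizing over $\Delta$ picks out the largest coordinate:
\[
\max_{\mu\in\Delta}f(\mu,\pi)=\max_{i\ne j}e_{ij}(\pi).
\]
Therefore
\[
\tstar(P)=\min_{\pi\in D(P)}\max_{\mu\in\Delta}f(\mu,\pi).
\]
The minimax theorem for bilinear functions on two compact convex sets lets us swap min and max, which gives $\tstar(P)=\max_\mu\min_\pi f(\mu,\pi)$. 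Alternatively, the whole statement follows from LP duality applied to the epigraph LP below, and that route also yields the second form.

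For the second form, I would write $\tstar(P)$ as an LP and dualize it. The primal is
\[
\min\ t\quad\text{s.t.}\quad \sum_\sigma\pi(\sigma)Q^\sigma_{io}=P_{io}\ \ \forall i,o,\qquad \sum_\sigma\pi(\sigma)\env_{ij}(\sigma)\le t\ \ \forall i\ne j,\qquad \pi\ge0.
\]
The normalization $\sum_\sigma\pi(\sigma)=1$ is implied by the row sums of $P$. I would attach free multipliers $Y_{io}$ to the marginal equalities and multipliers $\mu_{ij}\ge0$ to the envy constraints. The coefficient of $t$ forces $\sum\mu_{ij}=1$, so $\mu\in\Delta$. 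The column of $\pi(\sigma)$ gives the dual constraint
\[
\sum_i Y_{i\sigma(i)}\le\sum_{i\ne j}\mu_{ij}\env_{ij}(\sigma)\quad\text{for every }\sigma\in\Perm,
\]
and the dual objective is $\langle Y,P\rangle$. The primal is feasible: by Birkhoff--von~Neumann there is some decomposition, and any $t\ge1$ works. Its value is bounded below by $0$. Hence strong duality holds and the dual maximum is attained.

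It is worth checking that the two forms agree directly. For fixed $\mu$, the inner problem $\min_{\pi\in D(P)}\sum_\sigma\pi(\sigma)c_\mu(\sigma)$, with $c_\mu(\sigma)=\sum\mu_{ij}\env_{ij}(\sigma)$, is an LP over the marginal constraints. Its dual is $\max\{\langle Y,P\rangle:\sum_iY_{i\sigma(i)}\le c_\mu(\sigma)\ \forall\sigma\}$. Maximizing over $\mu$ then recovers the second form, so the two descriptions coincide.

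The main thing to be careful about is the sign and scaling bookkeeping in the dualization. In particular, the equality constraints must give free $Y$, and the $t$-column must give exactly $\sum\mu=1$ rather than $\le1$. The latter is right because $t$ is a free variable, so its dual constraint is an equality.
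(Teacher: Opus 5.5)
Your proposal is correct and takes the same route as the paper. You get the first identity from von~Neumann's minimax theorem for the bilinear payoff over the simplex and the decomposition polytope, and the second form as the LP dual of $\min\{t:\sum_\sigma\pi(\sigma)Q^\sigma=P,\ \sum_\sigma\pi(\sigma)\env_{ij}(\sigma)\le t,\ \pi\ge0\}$. Your bookkeeping fills in details the paper leaves implicit: free $Y$, the free variable $t$ forcing $\sum\mu_{ij}=1$, and feasibility plus boundedness giving strong duality.
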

\begin{proof}
The first equality is von~Neumann's minimax theorem applied to the bilinear function
$(\mu,\pi)\mapsto\sum\mu_{ij}e_{ij}(\pi)$ over two polytopes. The second is the LP dual of
$\min\{t: \sum_\sigma\pi(\sigma)Q^\sigma=P,\ \sum_\sigma\pi(\sigma)\env_{ij}(\sigma)\le t\ \forall
i\neq j,\ \pi\ge0\}$.
\end{proof}

The next identity shows that each agent's \emph{total} envy is the same in every decomposition of
$P$. A decomposition only decides how this envy is spread over the other agents.

\begin{proposition}[Envy budget]\label{prop:budget}
For every bistochastic $P$, every decomposition $\pi$ of $P$ and every agent $i$,
\[
\textstyle\sum_{j\neq i} e_{ij}(\pi)\;=\;\sum_{o\in\Ob}P_{io}\,\rk_i(o).
\]
If $P$ is SD-EF, the right-hand side is at most $(n-1)/2$. Hence, in every decomposition, the
average of $e_{ij}(\pi)$ over $j\ne i$ is at most $\frac12$.
\end{proposition}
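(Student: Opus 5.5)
The plan is to prove the identity pointwise for each deterministic assignment and then average over $\pi$. Fix $\sigma\in\Perm$ and agent $i$. Since $\sigma$ is a bijection, the objects held by the other agents are exactly $\Ob\setminus\{\sigma(i)\}$, each held by exactly one $j\ne i$. Hence $\sum_{j\ne i}\env_{ij}(\sigma)$ counts the objects $i$ strictly prefers to $\sigma(i)$, which is $\rk_i(\sigma(i))$ by definition. Taking expectations under $\sigma\sim\pi$ gives
\[
\textstyle\sum_{j\ne i}e_{ij}(\pi)=\sum_\sigma\pi(\sigma)\,\rk_i(\sigma(i))=\sum_{o\in\Ob}\Big(\sum_{\sigma:\sigma(i)=o}\pi(\sigma)\Big)\rk_i(o)=\sum_{o}P_{io}\,\rk_i(o),
\]
where the last step uses $\sum_\sigma\pi(\sigma)Q^\sigma=P$. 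In particular, the total does not depend on the decomposition.

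For the SD-EF bound, I would rewrite the expected rank through upper sets. We have $\rk_i(o)=\sum_{k=1}^{n-1}\mathbf{1}[o\notin U_i^k]$, so
\[
\textstyle\sum_o P_{io}\rk_i(o)=\sum_{k=1}^{n-1}\big(1-\sum_{o\in U_i^k}P_{io}\big).
\]
SD-EF gives $\sum_{o\in U_i^k}P_{io}\ge\sum_{o\in U_i^k}P_{jo}$ for every $j$. Averaging over all $j\in\N$, including $i$ itself, and using column sums equal to $1$, gives $\sum_{o\in U_i^k}P_{io}\ge\frac1n|U_i^k|=k/n$. Therefore
\[
\textstyle\sum_o P_{io}\rk_i(o)\le\sum_{k=1}^{n-1}(1-k/n)=(n-1)-\frac{n-1}{2}=\frac{n-1}{2}.
\]
Dividing by the $n-1$ agents $j\ne i$ then shows that the average envy probability is at most $\tfrac12$.

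There is no real obstacle. The only step needing care is the averaging argument in the SD-EF bound, namely that comparing $i$'s row with the average row yields $\sum_{o\in U_i^k}P_{io}\ge k/n$.
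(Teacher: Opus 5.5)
Your proposal is correct and follows the paper's proof: the pointwise count $\sum_{j\ne i}\env_{ij}(\sigma)=\rk_i(\sigma(i))$ averaged over $\pi$, then SD-EF compared against the average row using column sums. The only difference is that you make the monotonicity step explicit by writing $\rk_i$ as a sum of upper-set indicators and averaging the prefix inequalities before summing. The paper leaves that step implicit, and both are valid.
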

\begin{proof}
In any assignment $\sigma$, the number of agents that $i$ envies equals the number of objects
$i$ prefers to $\sigma(i)$, which is $\rk_i(\sigma(i))$, because every object is held by exactly one agent. Taking
expectations gives the identity. If $P$ is SD-EF, then $P_i$ stochastically dominates $P_j$ with
respect to $\succ_i$ for every $j$, and $\rk_i$ is decreasing along $\succ_i$. So
$\sum_oP_{io}\rk_i(o)\le\sum_oP_{jo}\rk_i(o)$ for all $j$. Averaging over all $j\in\N$ and using that
columns sum to one, we get $\sum_oP_{io}\rk_i(o)\le\frac1n\sum_o\rk_i(o)=\frac{n-1}{2}$.
\end{proof}

\begin{corollary}[$1/2$ is the right threshold]\label{cor:half}
For every profile $R$ we have $\tstar(J/n)=\frac12$, and hence $\lambda(R)\ge\frac12$. If two
agents have identical preferences, then $e_{ij}(\pi)+e_{ji}(\pi)=1$ for every decomposition $\pi$ of every matrix.
\end{corollary}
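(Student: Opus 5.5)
We prove the three assertions separately, treating the identical-preference fact first because it gives the lower bound for the uniform matrix.

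\emph{Identical preferences.} Let $i,j$ share a preference $\succ$. In any assignment $\sigma$ the objects $\sigma(i)\ne\sigma(j)$ are distinct, and $\succ$ is strict, so exactly one of $\sigma(j)\succ\sigma(i)$ and $\sigma(i)\succ\sigma(j)$ holds. Hence $\env_{ij}(\sigma)+\env_{ji}(\sigma)=1$ for every $\sigma\in\Perm$. Taking expectations under any distribution $\pi$ gives $e_{ij}(\pi)+e_{ji}(\pi)=1$. This uses nothing about $P$.

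\emph{Upper bound $\tstar(J/n)\le\frac12$.} We take $\pi$ uniform over all of $\Perm$, which decomposes $J/n$. For fixed $i\ne j$, the transposition of the objects of $i$ and $j$ is a measure-preserving involution on $\Perm$. It maps the event $\{\sigma(j)\succ_i\sigma(i)\}$ onto its complement, since the pair $(\sigma(i),\sigma(j))$ is exchanged and consists of distinct objects. So $e_{ij}(\pi)=\frac12$ exactly, and therefore $\tstar(J/n)\le\frac12$.

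\emph{Lower bound $\tstar(J/n)\ge\frac12$.} This is the one step that needs an idea, because the profile may have no two agents with identical preferences. The first route is Proposition~\ref{prop:budget}. For $P=J/n$ we have $\sum_o P_{io}\rk_i(o)=\frac1n\sum_{r=0}^{n-1}r=\frac{n-1}{2}$. So in every decomposition $\pi$ of $J/n$, $\sum_{j\ne i}e_{ij}(\pi)=\frac{n-1}{2}$ is a sum of $n-1$ terms. Some term is therefore at least $\frac12$, which gives $\max_{i\ne j}e_{ij}(\pi)\ge\frac12$ and hence $\tstar(J/n)\ge\frac12$. This holds for every profile, and $n\ge2$ is needed so that pairs exist. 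When some pair of agents does share a preference, the identical-preference fact gives the bound more directly, but the budget argument already covers every case.

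\emph{Conclusion.} Combining the two bounds gives $\tstar(J/n)=\frac12$. Since $J/n\in\sdef(R)$, because every row equals $\frac1n\mathbf 1$ and the dominance inequalities hold with equality, we obtain $\lambda(R)\ge\tstar(J/n)=\frac12$.
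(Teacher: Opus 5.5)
Your proposal is correct and follows the paper's proof essentially step for step. The lower bound comes from the envy-budget identity at expected rank $(n-1)/2$, the upper bound from the uniform distribution over $\Perm$ (your swap involution is another way of saying that $(\sigma(i),\sigma(j))$ is a uniform ordered pair of distinct objects), and the identical-preference identity from the strictness of the common order, with your explicit check that $J/n\in\sdef(R)$ filling in a step the paper leaves implicit.
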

\begin{proof}
For $P=J/n$, every agent has expected rank exactly $(n-1)/2$, so by
Proposition~\ref{prop:budget} some $e_{ij}\ge1/2$ in every decomposition. Conversely, the uniform
distribution over $\Perm$ decomposes $J/n$. Under it, $(\sigma(i),\sigma(j))$ is a uniformly random
ordered pair of distinct objects, so $e_{ij}=1/2$ for every profile. For identical preferences,
exactly one of $i,j$ envies the other in every assignment.
\end{proof}
So if every SD-EF matrix is EF-decomposable, the threshold $1/2$ is attained for \emph{every}
profile. By Proposition~\ref{prop:budget}, envy is always balanced on average, so a counterexample
would have to concentrate envy on specific pairs. Identical preferences also constrain the
matrix itself:

\begin{observation}\label{obs:rows}
If agents $i$ and $j$ have the same preference and $P$ is SD-EF, then $P_i=P_j$.
\end{observation}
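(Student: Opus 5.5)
The plan is to apply the SD-EF inequalities twice, once from each agent's point of view, and then use that $\succ_i=\succ_j$ to turn the two inequalities into an equality of cumulative sums. Write $\succ$ for the common preference, so that $U_i^k=U_j^k$ for every $k\in[n]$; denote this set by $U^k$.

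First, SD-EF for agent $i$ against $j$ gives
\[
\textstyle\sum_{o\in U^k}P_{io}\ \ge\ \sum_{o\in U^k}P_{jo}\quad\text{for all }k.
\]
Second, SD-EF for agent $j$ against $i$ gives, since $U_j^k=U^k$ as well,
\[
\textstyle\sum_{o\in U^k}P_{jo}\ \ge\ \sum_{o\in U^k}P_{io}\quad\text{for all }k.
\]
Combining the two, the cumulative sums $F_i(k)=\sum_{o\in U^k}P_{io}$ and $F_j(k)=\sum_{o\in U^k}P_{jo}$ coincide for every $k=1,\dots,n$. Set $F(0)=0$ for both rows.

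Now let $o_k$ be the $k$-th most preferred object, so that $U^k\setminus U^{k-1}=\{o_k\}$. Taking differences of consecutive cumulative sums gives
\[
P_{io_k}=F_i(k)-F_i(k-1)=F_j(k)-F_j(k-1)=P_{jo_k}
\]
for every $k$. Hence $P_i=P_j$.

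There is no real obstacle here. The only point to state explicitly is that the nested sets $U^k$ are the same for both agents precisely because their preferences coincide. That is what makes the two directions of the SD-EF constraint compare the same sums, so they can be combined into an equality.
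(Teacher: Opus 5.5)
Your proposal is correct and follows the paper's own argument: you apply SD-EF in both directions with respect to the common order to get equal prefix sums, then take consecutive differences to conclude $P_i=P_j$. The paper's one-line proof is exactly this, and you spell out the differencing step it leaves implicit.
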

\begin{proof}
SD-EF in both directions with respect to the common order gives equal prefix sums, hence equal rows.
\end{proof}

\section{Four Agents and Beyond}\label{sec:computation}

By Lemma~\ref{lem:vertex}, EF-decomposability of all SD-EF matrices of a profile is a finite
question: one LP for each vertex of $\sdef(R)$. This section turns that reduction into certified
computations.

\paragraph{Pipeline} For each profile orbit we (i) enumerate the vertices of $\sdef(R)$ with the
double-description method in exact rational arithmetic (cddlib with GMP~\cite{cddlib}); (ii) solve
the LP defining $\tstar$ at each vertex with HiGHS~\cite{HH18}; and (iii) re-solve the LP in exact
rational arithmetic, on the support returned by HiGHS or on all permutations, which yields a
rational decomposition $\pi$ with $\sum_\sigma\pi(\sigma)Q^\sigma=P$ and $e_{ij}(\pi)\le\frac12$ for
all $i\neq j$. Every vertex in the runs below receives such a certificate. Correctness then
rests on the vertex enumeration and the certificates, both of which are exact. For $n=4$ we store
all certificates, and a $40$-line verifier checks them with rational arithmetic and checks that
the orbit list covers all of $S_4^4$. For $n\ge5$ the certificates are recomputed on demand.

\paragraph{Symmetry and types} $\lambda$ is invariant under renaming agents and renaming objects,
so we work with \emph{profile orbits} under $S_n\times S_n$. There are $10$ orbits for $n=3$,
$762$ for $n=4$ and $1{,}876{,}255$ for $n=5$. By Observation~\ref{obs:rows}, SD-EF forces agents
of the same type to have identical rows. This \emph{type reduction} shrinks $\sdef(R)$ to a
polytope with one row per type.

\begin{table}[t]
\caption{Summary of the computations. ``Exact'' means that every SD-EF vertex (for PS, every PS
matrix) of every listed orbit received a rational certificate of $\tstar\le\frac12$. ``Ascent'' is
the adversarial search of Section~\ref{sec:computation}. No violation was found.}
\label{tab:results}
\small
\begin{tabular}{@{}llrrl@{}}
\toprule
$n$ & class & orbits & vertices & method\\
\midrule
4 & all profiles & 762 & 26{,}927 & exact\\
5 & pattern $(3,1,1)$ & 7{,}021 & 661{,}039 & exact\\
5 & pattern $(2,2,1)$ & 7{,}021 & 937{,}832 & exact\\
5 & pattern $(2,1,1,1)$ & 273{,}819 & 213{,}980{,}202 & exact\\
5 & PS matrices & 1{,}876{,}016 & -- & exact\\
6 & pattern $(2,2,2)$ & 86{,}067 & 50{,}833{,}539 & exact\\
5 & random profiles & 25{,}000 & -- & ascent\\
6 & random profiles & 5{,}000 & -- & ascent\\
7 & random profiles & 1{,}300 & -- & ascent\\
\bottomrule
\end{tabular}
\end{table}

\begin{theorem}\label{thm:n4}
Every SD-EF assignment matrix with four agents is EF-decomposable. Moreover, $\lambda(R)=1/2$ for
every four-agent profile $R$.
\end{theorem}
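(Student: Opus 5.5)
The plan is to reduce the statement to a finite, exactly checkable computation, following the pipeline described above. By Lemma~\ref{lem:vertex}, it suffices to show $\tstar(P)\le\frac12$ at every vertex $P$ of $\sdef(R)$ for every four-agent profile $R$. Corollary~\ref{cor:half} already gives $\lambda(R)\ge\frac12$, since $J/n\in\sdef(R)$ and $\tstar(J/n)=\frac12$. So the second claim, $\lambda(R)=\frac12$, follows immediately once the upper bound is established.

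Since $\tstar$ and $\sdef(R)$ are equivariant under renaming agents and objects, $\lambda$ is constant on orbits of $S_4\times S_4$ acting on $S_4^4$. I would enumerate orbit representatives, expecting $762$ of them. Completeness of this list would be checked by applying all $576$ group elements to each representative and confirming that the union of the orbits is all $24^4=331{,}776$ profiles.

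For each representative I would write down the $H$-representation of $\sdef(R)$: the Birkhoff constraints together with the $n(n-1)(n-1)$ SD-EF inequalities. I would then enumerate its vertices with an exact rational double-description method (cddlib/GMP); summed over all representatives, I expect $26{,}927$ vertices. At each vertex I would solve the LP $\min\{t:\sum_\sigma\pi(\sigma)Q^\sigma=P,\ \sum_\sigma\pi(\sigma)\env_{ij}(\sigma)\le t,\ \pi\ge0\}$ over the $24$ permutations. A floating-point solve would supply a candidate support, and I would re-solve exactly over the rationals on that support, falling back to all $24$ permutations if the support fails. The output is a rational $\pi$, which certifies the vertex once one checks exactly that $\pi\ge0$, $\sum_\sigma\pi(\sigma)=1$, $\sum_\sigma\pi(\sigma)Q^\sigma=P$, and $e_{ij}(\pi)\le\frac12$ for all $i\neq j$.

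The proof then consists of three exact checks: the orbit coverage, the vertex lists, and the certificates, with the last verified by a short independent script. The step I expect to be the main obstacle is trusting the vertex enumeration, because a missed vertex would break the reduction. To address this, I would cross-check each vertex list in two ways. First, every listed point should satisfy the inequalities and have an active constraint set of rank $16$ minus the dimension of the equality system. Second, the convex hull of the listed vertices should reproduce the facets, which can be checked by running the double-description method in reverse, from the vertices back to the $H$-representation. A secondary risk is LP degeneracy at vertices where $\tstar=\frac12$ exactly, where floating-point supports may be infeasible; the exact re-solve over all permutations handles those cases.
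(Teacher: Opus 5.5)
Your proposal is correct and follows the paper's computer-aided proof: the vertex reduction of Lemma~\ref{lem:vertex}, the $762$ orbit representatives with exact double-description vertex enumeration ($26{,}927$ vertices), floating-point LP solves re-solved in exact rational arithmetic to give certificates checked by an independent verifier that also confirms orbit coverage of all $24^4$ profiles, and the lower bound $\lambda(R)\ge\frac12$ from Corollary~\ref{cor:half}. Your extra cross-check of vertex completeness, which recovers the facets from the listed vertices, goes slightly beyond the paper, which relies only on the exactness of the cddlib run.
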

\begin{proof}[Proof (computer-aided)]
By Lemma~\ref{lem:vertex} it suffices to check the vertices of $\sdef(R)$ for one profile per
orbit. The $762$ orbits have $26{,}927$ vertices in total, with at most $375$ for any one
orbit. For each vertex we computed an exact rational decomposition with maximum envy probability
at most $1/2$, and the verifier confirms all of them. Equality holds by
Corollary~\ref{cor:half}.
\end{proof}

Theorem~\ref{thm:n4} is the first case beyond the results of \citet{KSSY26}. It covers all SD-EF
matrices, including those not produced by any known rule. The computation takes under
half a minute on a 24-core workstation.

\begin{theorem}\label{thm:n5types}
Every SD-EF assignment matrix with five agents and at most four distinct preferences is
EF-decomposable.
\end{theorem}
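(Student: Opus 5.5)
The plan is a computer-aided proof in the same style as Theorem~\ref{thm:n4}, with the type reduction keeping the computation feasible. First I will reduce the multiplicity patterns. A five-agent profile with at most four types has pattern $(5)$, $(4,1)$, $(3,2)$, $(3,1,1)$, $(2,2,1)$ or $(2,1,1,1)$. Patterns with at most two types are covered by \citet{KSSY26} (and by Theorem~\ref{thm:tworows}). Pattern $(3,1,1)$, where all agents but two share a preference, is covered by Theorem~\ref{thm:twoplus}, but I will also include it in the exact computation as an independent check. This leaves the patterns $(3,1,1)$, $(2,2,1)$ and $(2,1,1,1)$ listed in Table~\ref{tab:results}.

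For each pattern I will enumerate the profile orbits under $S_5\times S_5$, with the convention that agents of one type are interchangeable. By Observation~\ref{obs:rows}, every SD-EF matrix has identical rows within a type. So $\sdef(R)$ is affinely isomorphic to a polytope in the space of one row per type, subject to these constraints:
\begin{itemize}
\item the column sums, weighted by multiplicities, equal one;
\item nonnegativity;
\item the SD-EF inequalities between distinct types.
\end{itemize}
The SD-EF inequalities within a type hold with equality automatically. I will enumerate the vertices of this reduced polytope exactly with cddlib and lift each vertex back to a $5\times 5$ matrix. Lemma~\ref{lem:vertex} then reduces the theorem to showing $\tstar(P)\le\frac12$ at each lifted vertex.

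For each vertex I will solve the LP for $\tstar$ over all $120$ permutations in floating point with HiGHS. I will then re-solve it exactly on the returned support, falling back to all permutations if needed, to obtain a rational $\pi$. For this $\pi$ I will check exactly that $\sum_\sigma\pi(\sigma)Q^\sigma=P$, that $\pi\ge 0$, and that $e_{ij}(\pi)\le\frac12$ for all ordered pairs.

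Two simplifications help here. Pairs within a type satisfy $e_{ij}+e_{ji}=1$ by Corollary~\ref{cor:half}. Symmetrizing $\pi$ over permutations of agents within a type preserves $P$, because those rows are equal, and it makes such pairs exactly $\frac12$. So the LP can be restricted to type-symmetric decompositions, and only pairs of distinct types need checking.

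The main obstacle will be scale. Pattern $(2,1,1,1)$ alone has about $2.7\times10^5$ orbits and over $2\times10^8$ vertices. The plan therefore relies on three measures. The first is aggressive orbit reduction, which includes canonicalizing profiles under object relabelling and agent relabelling within types. The second is parallelizing over orbits. The third is to store nothing and instead recompute certificates on demand, so that memory does not grow with the number of vertices. Correctness rests only on exact vertex enumeration and the exact certificate check. A floating-point LP that reports a value slightly above $\frac12$ would trigger a full exact LP over all permutations at that vertex. No such case is expected, since adversarial search found no violations.
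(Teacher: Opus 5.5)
Your proposal is correct and follows essentially the same route as the paper. The at-most-two-type patterns go to \citet{KSSY26}, and the patterns $(3,1,1)$, $(2,2,1)$ and $(2,1,1,1)$ are settled by exact vertex enumeration of the type-reduced polytope, with an exact rational certificate at every vertex. One small correction to your last remark: every such profile has two agents with identical preferences, so $\tstar=\frac12$ holds exactly at every vertex (Corollary~\ref{cor:half}), and floating-point values at or just above $\frac12$ are routine rather than unexpected — which is why your plan's exact certification at every vertex is necessary.
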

\begin{proof}[Proof (computer-aided)]
Profiles with at most two types are covered by \citet{KSSY26}. With three types the multiplicity
pattern is $(3,1,1)$ or $(2,2,1)$, each with $7{,}021$ orbits. We enumerate the vertices of the
type-reduced polytope exactly and certify each as above. The two patterns contribute
$661{,}039$ and $937{,}832$ vertices, with at most $403$ and $529$ per orbit. Every vertex receives an
exact certificate, which is necessary because agents with identical preferences force
$\tstar=\frac12$ exactly (Corollary~\ref{cor:half}). With four types the pattern is $(2,1,1,1)$,
with $273{,}819$ orbits and $213{,}980{,}202$ vertices, at most $11{,}910$ per orbit. This run was
distributed over a $96$-core server and a workstation, which claimed chunks of $100$ orbits through
atomic directory creation. A final merge confirmed that every chunk was completed exactly once. The
run took about a day, and every vertex again received an exact certificate.
\end{proof}

The same pipeline scales to six agents when the type structure is coarse enough.
\begin{theorem}\label{thm:n6types}
Every SD-EF assignment matrix with six agents forming three pairs of identical preferences (multiplicity
pattern $(2,2,2)$) is EF-decomposable.
\end{theorem}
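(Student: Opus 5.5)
The plan mirrors the proofs of Theorems~\ref{thm:n4} and~\ref{thm:n5types}: reduce the statement to finitely many certified linear programs, using Lemma~\ref{lem:vertex}, symmetry, and the type reduction of Observation~\ref{obs:rows}.

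\emph{Step 1 (reduction to vertices).} Fix a six-agent profile $R$ with pattern $(2,2,2)$, with types $A,B,C$ each held by two agents. By Observation~\ref{obs:rows}, every $P\in\sdef(R)$ has $P_i=P_j$ for agents of the same type. Hence $\sdef(R)$ is affinely isomorphic to a polytope in $\mathbb{R}^{3\times 6}$ with one row per type. Its constraints are nonnegativity, row sums $1$, weighted column sums $2(x_{Ao}+x_{Bo}+x_{Co})=1$, and the SD-EF prefix inequalities between the three types. (Within a type these inequalities hold with equality automatically.) By Lemma~\ref{lem:vertex}, it suffices to show $\tstar(P)\le\frac12$ at every vertex of this reduced polytope.

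\emph{Step 2 (orbits).} $\lambda$ is invariant under renaming objects and agents, so I would enumerate patterns as triples of orders up to $S_6$ acting on objects and $S_3$ permuting the types. Fixing type $A$ to the identity leaves unordered multisets of two further distinct orders, giving the $86{,}067$ orbits. I would double-check this count with Burnside's lemma, so that coverage of all profiles is verified independently of the generator.

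\emph{Step 3 (vertices and certificates).} For each orbit, enumerate the vertices of the reduced polytope exactly with cddlib. Expand each vertex to a $6\times6$ bistochastic matrix. Solve the $\tstar$ LP over $720$ permutations with HiGHS, then re-solve exactly on the returned support (or on all of $\Perm$ if needed) to get a rational $\pi$ with $\sum_\sigma\pi(\sigma)Q^\sigma=P$ and $e_{ij}(\pi)\le\frac12$.

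Exactness is essential here. By Corollary~\ref{cor:half}, paired agents have $e_{ij}+e_{ji}=1$, so the optimum equals exactly $\frac12$ and floating-point answers cannot be trusted. Two symmetry reductions can shrink the LP. First, the pair constraints within a type are automatically satisfied at $\frac12$ by averaging $\pi$ with its image under swapping the two agents of that type; that image decomposes the same $P$, since their rows are equal. Second, by the same averaging, one may restrict to $\pi$ invariant under the three within-type swaps, which reduces the variables to orbits of $\Perm$ under $(\mathbb{Z}_2)^3$.

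\emph{Main obstacle.} The expected bottleneck is scale: roughly $5\times10^7$ vertices, each requiring an exact LP over hundreds of permutations. I would first exploit the symmetrization above and warm-start the exact solve from the floating-point support. I would then distribute orbit chunks with a completion check as in Theorem~\ref{thm:n5types}. Every vertex receiving a certificate then proves the theorem.
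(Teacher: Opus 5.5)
Your proposal is correct and follows the paper's computer-aided proof: Lemma~\ref{lem:vertex} with the type reduction of Observation~\ref{obs:rows}, exact enumeration of the vertices of the reduced polytope for each of the $86{,}067$ orbits (the paper reports $50{,}833{,}539$ vertices), and an exact rational certificate for every vertex. Two side remarks: your within-type symmetrization is a valid extra economy that the paper does not use; and fixing one type to the identity by itself gives $\binom{719}{2}=258{,}121$ representatives, so reaching $86{,}067$ also requires canonicalizing over which type is sent to the identity, although over-covering would not affect correctness.
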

\begin{proof}[Proof (computer-aided)]
All $86{,}067$ orbits with this pattern have $50{,}833{,}539$ type-reduced vertices in total, and at most
$3{,}140$ for any one orbit. Each vertex received an exact certificate. The run took about $7$ hours on
$96$ cores.
\end{proof}

\begin{theorem}\label{thm:ps5}
For every five-agent profile, the PS assignment matrix is EF-decomposable.
\end{theorem}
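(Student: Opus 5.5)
This is a computer-aided proof in the same spirit as Theorem~\ref{thm:n4}. The difference is that we no longer need the vertex reduction of Lemma~\ref{lem:vertex}: each profile has a single PS matrix, so we certify that matrix directly.

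\emph{Orbit reduction.} Both PS and $\tstar$ are equivariant under renaming agents and renaming objects. PS is anonymous and neutral, so permuting agents and objects permutes the rows and columns of the PS matrix accordingly. Envy probabilities transform the same way. It therefore suffices to treat one representative per orbit of $S_5^5$ under $S_n\times S_n$, which gives $1{,}876{,}255$ orbits.

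\emph{Dispatching the profiles.} We first remove orbits that are already settled.
\begin{itemize}
\item Orbits with at most two preference types are covered by \citet{KSSY26}.
\item Orbits with at most four types are covered by Theorem~\ref{thm:n5types}, because the PS matrix is SD-EF~\cite{BM01}.
\end{itemize}
Strictly speaking, only profiles with five distinct preferences need new work. Table~\ref{tab:results} lists $1{,}876{,}016$ orbits, so I would still certify essentially all of them uniformly, both for robustness and because those certificates are cheap.

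\emph{Certifying each orbit.} For each representative $R$ the steps are:
\begin{enumerate}
\item Compute the PS matrix in exact rational arithmetic by running the eating algorithm. Every breakpoint time is rational, since each breakpoint is a remaining supply divided by an integer number of eaters.
\item Solve the LP $\min\{t:\sum_\sigma\pi(\sigma)Q^\sigma=P,\ e_{ij}(\pi)\le t,\ \pi\ge0\}$ over all $120$ permutations, using HiGHS in floating point.
\item Re-solve in exact rationals, restricted to the returned support. This yields a rational $\pi$ with $\sum_\sigma\pi(\sigma)Q^\sigma=P$ and $e_{ij}(\pi)\le\frac12$ for all $i\ne j$, and the certificate can be verified exactly.
\end{enumerate}

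\emph{Main obstacle.} The difficulty is scale and exactness, not any mathematical subtlety. When the float solution sits at $t=\frac12$ exactly, which happens whenever ties force it (Corollary~\ref{cor:half}), the support-restricted exact LP may be infeasible because of numerical error in the support choice. I would handle this in two stages:
\begin{enumerate}
\item Fall back to the exact LP on all $120$ permutations.
\item If that were also infeasible, it would reveal a genuine counterexample. That would be a finding rather than a failure of the method.
\end{enumerate}
Finally, a coverage check must confirm that the processed orbit list covers all $120^5$ profiles.
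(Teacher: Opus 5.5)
Your proposal is correct and is essentially the paper's proof. The paper computes the PS matrix exactly for each orbit and skips the $239$ orbits with at most two types via \citet{KSSY26}. For the remaining $1{,}876{,}016$ orbits it solves the LP in floating point, re-solves exactly on the returned support, and verifies each rational decomposition in exact arithmetic; the full-LP fallback was never needed. Your extra remark is also valid and the paper does not use it: Theorem~\ref{thm:n5types} already covers all profiles with at most four types because PS is SD-EF, so only the five-type orbits require new certificates.
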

\begin{proof}[Proof (computer-aided)]
We compute the PS matrix exactly for each of the $1{,}876{,}255$ orbits and skip the $239$ orbits with
at most two types~\cite{KSSY26}. For each of the remaining $1{,}876{,}016$ orbits, we solve the LP in
floating point, re-solve it in exact rational arithmetic on the support of the floating-point
solution, and check the resulting rational decomposition in exact arithmetic. Every certificate has
maximum envy probability at most $\frac12$. The run takes $8$ minutes with ten worker processes.
For $66{,}360$ orbits even $\tstar\le\frac14$, but the maximum is exactly $\frac12$.
\end{proof}

\paragraph{Adversarial search for larger $n$} Full vertex enumeration becomes impractical at
$n=5$ with five distinct preferences, where a single polytope can have millions of vertices. We
therefore search for violations directly. Since $\tstar$ is convex (Lemma~\ref{lem:vertex}), we
use its dual certificate from Lemma~\ref{lem:dual} as a subgradient. Given a vertex $P$ with
optimal dual $Y$, we move to the vertex $P'$ of $\sdef(R)$ maximizing $\langle Y,P'\rangle$. This
gives $\tstar(P')\ge\langle Y,P'\rangle\ge\langle Y,P\rangle=\tstar(P)$, so every step is an ascent
step and the walk ends in a local maximum. With $10$--$30$ random restarts per profile we examined
$25{,}000$ random profiles for $n=5$ ($5{,}000$ of them with three types), $5{,}000$ for $n=6$
($2{,}000$ with three types) and $1{,}300$ for $n=7$. For every profile, the largest value found
was $1/2$ up to floating-point error. Theorem~\ref{thm:twoplus} in Section~\ref{sec:general} covers
the patterns $(3,1,1)$ for $n=5$ and $(4,1,1)$ for $n=6$ analytically. It agrees with the
exhaustive computation for $(3,1,1)$ and with an exact run on $300$ random orbits with pattern
$(4,1,1)$ ($0.15$ million vertices).

\begin{conjecture}\label{conj:main}
For every $n$ and every profile, every SD-EF assignment matrix is EF-decomposable; that is,
$\lambda(R)=1/2$.
\end{conjecture}

\section{General \texorpdfstring{$n$}{n}: Two Provable Classes}\label{sec:general}

We now prove Conjecture~\ref{conj:main} for two classes, for every $n$. In both, a decomposition
that depends on $P$ alone works: a symmetrized decomposition for matrices with at most two
distinct rows (Section~\ref{sec:tworows}), and the maximum-entropy decomposition when all agents but
two share a preference (Section~\ref{sec:twoplus}). Section~\ref{sec:product} supplies the tool that
turns product-form joint laws into envy bounds.

\subsection{Symmetrization and Two Distinct Rows}\label{sec:tworows}

Let $G_P$ be the group of agent permutations $g$ with $P_{g(i)}=P_i$ for all $i$. For a
decomposition $\pi$, let $\pi\circ g$ be the law of $\sigma\circ g$ for $\sigma\sim\pi$, so that
agent $i$ receives $\sigma(g(i))$, and let $\bar\pi$ be the average of $\pi\circ g$ over $g\in G_P$.
Under $\pi\circ g$, agent $i$'s marginal is $P_{g(i)}=P_i$, so $\bar\pi$ is again a decomposition
of $P$, and it is $G_P$-invariant. If $P_i=P_{i'}$, the transposition $(i\,i')$ fixes $\bar\pi$.
Hence $(\sigma(i),\sigma(i'))$ is exchangeable, and $e_{ii'}(\bar\pi)=\frac12$ \emph{whatever the
preferences of $i$}. Combining this with the envy budget gives a short proof of a statement that
strictly contains the two-type theorem of~\citet{KSSY26}. It constrains the number of distinct rows
of $P$, not the number of types.

\begin{theorem}\label{thm:tworows}
Let $P$ be bistochastic with at most two distinct rows, and suppose every agent's expected rank
satisfies $\sum_oP_{io}\rk_i(o)\le\frac{n-1}2$. This holds in particular if $P$ is SD-EF. Then the
symmetrization $\bar\pi$ of \emph{every} decomposition $\pi$ of $P$ is \decef.
\end{theorem}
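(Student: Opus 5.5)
Suppose the distinct rows are $A$ and $B$, and write $N_A$, $N_B$ for the sets of agents holding them, with sizes $a$ and $b$, $a+b=n$. The symmetrization group $G_P$ contains every permutation of $N_A$ and every permutation of $N_B$. Hence $\bar\pi$ is invariant under the product of the two symmetric groups. Fix an agent $i$, and say $i\in N_A$ without loss of generality. We must show $e_{ij}(\bar\pi)\le\frac12$ for every $j\ne i$, splitting by the group of $j$.

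\textbf{Step 1: same-row partners.} If $j\in N_A$, the transposition $(i\,j)$ lies in $G_P$ and fixes $\bar\pi$. So $(\sigma(i),\sigma(j))$ is exchangeable under $\bar\pi$. Objects are distinct, so exactly one of the two events $\sigma(j)\succ_i\sigma(i)$ and $\sigma(i)\succ_i\sigma(j)$ occurs. Exchangeability makes them equally likely, so $e_{ij}(\bar\pi)=\frac12$, whatever $\succ_i$ is. This is the exchangeability remark made before the statement.

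\textbf{Step 2: other-row partners via the budget.} Every permutation of $N_B$ fixing nothing outside $N_B$ also fixes $i$, and leaves $\bar\pi$ invariant. Hence $e_{ij}(\bar\pi)$ takes a common value $x$ for all $j\in N_B$. Proposition~\ref{prop:budget} gives
\[
(a-1)\cdot\tfrac12 + b\,x \;=\; \textstyle\sum_{j\ne i}e_{ij}(\bar\pi) \;=\; \sum_oP_{io}\rk_i(o) \;\le\; \tfrac{n-1}{2}.
\]
Since $n-1=(a-1)+b$, this yields $b\,x\le b/2$. Therefore $x\le\frac12$ whenever $b\ge1$; if $b=0$ there is nothing to prove. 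Agents in $N_B$ are handled symmetrically, using their own expected-rank hypothesis.

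\textbf{Step 3: the SD-EF remark.} When $P$ is SD-EF, the expected-rank hypothesis holds by the second part of Proposition~\ref{prop:budget}.

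The only delicate point is in Step 2: $i$ must lie outside the permuted block, so that the permutation fixes $i$ and the envy probabilities $e_{ij}$, $j\in N_B$, are genuinely equal. This is immediate because a permutation of $N_B$ fixes every agent of $N_A$. I also need that $\bar\pi$ still decomposes $P$, which was noted before the statement. No other obstacle is expected.
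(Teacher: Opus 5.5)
Your proposal is correct and follows the paper's proof essentially step for step. Within a class, the transposition $(i\,j)$ makes envy exactly $\frac12$. Permutations of the other class fix $i$, so $e_{ij}(\bar\pi)$ takes a common value $x$ on that class, and the envy-budget identity of Proposition~\ref{prop:budget} then gives $(a-1)\tfrac12 + b\,x \le \tfrac{n-1}{2}$, that is, $x\le\frac12$.
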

\begin{proof}
Let the rows equal $r_A$ on $A$ and $r_B$ on $B=\N\setminus A$. Pairs inside a class have envy exactly
$\frac12$. Fix $i\in A$. For $j,j'\in B$, the transposition $(j\,j')$ fixes $i$ and $\bar\pi$, so $e_{ij}(\bar\pi)=x$ is
the same for all $j\in B$. By Proposition~\ref{prop:budget},
$\frac{|A|-1}{2}+|B|x=\sum_oP_{io}\rk_i(o)\le\frac{n-1}2=\frac{|A|-1}2+\frac{|B|}2$, so $x\le\frac12$.
\end{proof}
By Observation~\ref{obs:rows}, two types imply two distinct rows under SD-EF, but the converse
fails: Theorem~\ref{thm:tworows} covers profiles in which all $n$ agents have different preferences.
The maximum-entropy decomposition is $G_P$-invariant, since it is the unique maximizer of a strictly
concave function that $G_P$ leaves invariant. It therefore equals its own symmetrization and is
\decef in this case.

\subsection{Product-Form Couplings and Maximum Entropy}\label{sec:product}

Symmetrization handles pairs of agents with equal rows. For a pair $i,j$ with different rows, what
matters is the joint law of $(\sigma(i),\sigma(j))$: a coupling of $P_i$ and $P_j$ that puts no
mass on the diagonal. For couplings of product form, SD-EF transfers to envy.

\begin{lemma}\label{lem:product}
Let $X,Y\in\mathbb{R}^n_{\ge0}$, and let $Q$ be a distribution on ordered pairs of distinct objects
with $Q_{ab}=X_aY_b$ for $a\ne b$. Suppose its first marginal $p$ stochastically dominates its second
marginal $q$ with respect to an order $\succ$. Then $\Pr_Q[b\succ a]\le\frac12$.
\end{lemma}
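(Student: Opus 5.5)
We prove the equivalent inequality $\Pr_Q[a\succ b]\ge\Pr_Q[b\succ a]$. This suffices because $Q$ puts no mass on the diagonal, so the two probabilities sum to $1$.

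\textbf{Step 1 (reduce to independent draws).} Label objects $1\succ 2\succ\dots\succ n$, so a smaller index is better. Let $S_X=\sum_a X_a$ and $S_Y=\sum_b Y_b$. If $S_X S_Y=0$ then $Q\equiv0$, which contradicts $Q$ being a distribution, so both sums are positive. Set $x=X/S_X$ and $y=Y/S_Y$, two probability vectors. Define the antisymmetric kernel $G(a,b)=\mathbf 1[a<b]-\mathbf 1[b<a]$, which vanishes on the diagonal. Then
\[
\Pr_Q[a\succ b]-\Pr_Q[b\succ a]=\sum_{a\ne b}X_aY_b\,G(a,b)=S_XS_Y\,\mathbb E_{a\sim x,\,b\sim y}[G(a,b)],
\]
where $a$ and $b$ are independent. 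The diagonal terms dropped from $Q$ contribute nothing because $G(a,a)=0$. So it suffices to show $\mathbb E_{x\otimes y}G\ge0$.

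\textbf{Step 2 (transfer the dominance hypothesis from $p,q$ to $x,y$).} The marginals are $p_a=X_a(S_Y-Y_a)$ and $q_a=Y_a(S_X-X_a)$. The $X_aY_a$ terms cancel in the difference, giving $p_a-q_a=X_aS_Y-Y_aS_X$. Summing over $a\le k$, the hypothesis that $p$ dominates $q$ reads $S_Y X(\le k)-S_X Y(\le k)\ge0$ for every $k$. Dividing by $S_XS_Y$ gives $x(\le k)\ge y(\le k)$ for all $k$, so $x$ stochastically dominates $y$ with respect to $\succ$. This cancellation is the key point of the proof: the product form is exactly what makes dominance of the diagonal-free marginals equivalent to dominance of the factors.

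\textbf{Step 3 (conclude).} Fix $b$. The function $a\mapsto G(a,b)$ is nonincreasing in the index $a$, meaning it is larger for better objects. Since $x$ dominates $y$, we get $\mathbb E_{a\sim x}G(a,b)\ge\mathbb E_{a\sim y}G(a,b)$. This is the standard fact that a dominating lottery gives a larger expectation to every function that is monotone in the preference; it follows by Abel summation over the prefix sums from Step 2. Averaging over $b\sim y$ gives $\mathbb E_{x\otimes y}G\ge\mathbb E_{y\otimes y}G$. The right-hand side is $0$, because $G$ is antisymmetric and $y\otimes y$ is symmetric under swapping coordinates. Hence $\Pr_Q[b\succ a]\le\tfrac12$.

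The only genuinely delicate step is Step 2, where the removal of the diagonal could a priori break the link between dominance of the marginals and dominance of the factors. The explicit cancellation shows that it does not. Everything else is routine.
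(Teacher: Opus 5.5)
Your proof is correct and takes essentially the paper's route. Like the paper, you normalize the factors and note that the diagonal terms cancel in $p-q$, so dominance of $p$ over $q$ becomes dominance of the normalized factors (the paper's $\delta_k\ge0$). You then conclude for independent draws: the paper uses the telescoping identity $T^2=\sum_aX_a(\bar X_a+\bar X_{a-1})$, which amounts to comparing with $x\otimes x$ in the second coordinate, whereas you compare with $y\otimes y$ in the first coordinate via monotone expectations and antisymmetry of $G$; this is the same Abel-summation argument written differently.
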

\begin{proof}
Number the objects $1\succ\dots\succ n$. Rescaling $X$ and $Y$ by reciprocal factors does not change
$Q$, so we may assume $\sum X=\sum Y=T>0$. Write $\bar X_k,\bar Y_k$ for prefix sums, with
$\bar X_0=\bar Y_0=0$, and put $\delta_k=\bar X_k-\bar Y_k$. The diagonal terms cancel in the prefix sums of $p-q$, so
$\sum_{a\le k}(p_a-q_a)=T\delta_k$, and SD gives $\delta_k\ge0$. Moreover,
$\Pr[a\succ b]-\Pr[b\succ a]=T^2-\sum_aX_a(\bar Y_a+\bar Y_{a-1})$. Telescoping
$T^2=\sum_aX_a(\bar X_a+\bar X_{a-1})$ turns this into $\sum_aX_a(\delta_a+\delta_{a-1})\ge0$.
\end{proof}

The maximum-entropy decomposition produces such couplings. For $P>0$, the convex dual of entropy
maximization shows that $\pi^*(\sigma)\propto\prod_{k}W_{k\sigma(k)}$ for a positive matrix $W$. If
the agents of a set $S$ have identical rows, the dual objective is invariant under permuting the
dual variables of these agents, so averaging an optimal dual solution over such permutations shows
that we may take $W_k=w$ for all $k\in S$. Now let $S=\N\setminus\{i,j\}$. Summing over the
assignments of $S$ to the remaining objects gives, for $a\ne b$,
\[
\Pr_{\pi^*}[\sigma(i)=a,\sigma(j)=b]\ \propto\ W_{ia}W_{jb}\prod_{c\ne a,b}w_c\ \propto\
\frac{W_{ia}}{w_a}\cdot\frac{W_{jb}}{w_b},
\]
which is of product form. For matrices with zeros, we apply this to $(1-\varepsilon)P+\varepsilon J/n$,
which is SD-EF whenever $P$ is and keeps identical rows identical, and let $\varepsilon\to0$. The
maximum-entropy decomposition depends continuously on $P$, by Hoffman's error bound~\cite{Hof52} and
Berge's maximum theorem~\cite{Ber63} (Appendix~\ref{app:topo}), and envy probabilities are linear in
the decomposition. With Lemma~\ref{lem:product}, this proves:

\begin{corollary}\label{cor:maxent3}
If $P$ is SD-EF and all agents other than $i$ and $j$ have identical rows, then the maximum-entropy
decomposition of $P$ satisfies $e_{ij},e_{ji}\le\frac12$. In particular, for $n=3$ the
maximum-entropy decomposition of every SD-EF matrix is \decef.
\end{corollary}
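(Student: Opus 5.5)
The plan is to assemble the argument sketched just before the statement into a formal proof. There are three ingredients: the product form of the pair marginal under maximum entropy, Lemma~\ref{lem:product}, and a continuity argument for matrices with zeros.

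First assume $P>0$. I would write the entropy maximization over decompositions as a convex program with equality constraints $\sum_\sigma\pi(\sigma)Q^\sigma=P$. Because $P$ is strictly positive, the uniform-like interior decomposition shows that Slater's condition holds. The KKT conditions then give $\pi^*(\sigma)\propto\prod_k W_{k\sigma(k)}$ with $W>0$, where $W=\exp$ of the dual variables.

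The dual objective is convex and is invariant under permuting the dual rows of agents in $S=\N\setminus\{i,j\}$, because these agents have identical rows of $P$. Averaging an optimal dual solution over such permutations therefore gives an optimal dual with $W_k=w$ for all $k\in S$. Strict convexity is not needed here, only that the dual optimum is attained, which I will take from positivity of $P$.

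With this choice of $W$, the mass on $\sigma(i)=a$, $\sigma(j)=b$ with $a\neq b$ sums over bijections of $S$ onto $\Ob\setminus\{a,b\}$. That sum is $(n-2)!\,W_{ia}W_{jb}\prod_{c\neq a,b}w_c$, which is proportional to $X_aY_b$ with $X_a=W_{ia}/w_a$ and $Y_b=W_{jb}/w_b$. Its first marginal is $P_i$ and its second is $P_j$. SD-EF says $P_i$ dominates $P_j$ under $\succ_i$, so Lemma~\ref{lem:product} gives $e_{ij}=\Pr[\sigma(j)\succ_i\sigma(i)]\le\frac12$. The symmetric argument, with the roles of $X$ and $Y$ swapped and the order $\succ_j$, gives $e_{ji}\le\frac12$.

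For general $P$ I would set $P_\varepsilon=(1-\varepsilon)P+\varepsilon J/n$. SD-EF is preserved because $J/n$ is SD-EF and the SD-EF constraints are linear. Identical rows stay identical, and $P_\varepsilon>0$. The previous step then bounds the envy of the max-entropy decomposition $\pi^*_\varepsilon$ of $P_\varepsilon$.

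I expect the main obstacle to be passing to the limit. This needs $\pi^*_\varepsilon\to\pi^*_P$, which is continuity of the argmax of a strictly concave function over a feasible set $\{\pi\ge0:\sum\pi(\sigma)Q^\sigma=P\}$ that depends on $P$. I would use Hoffman's bound to make this correspondence Lipschitz in Hausdorff distance, hence continuous, and then apply Berge's maximum theorem with uniqueness of the maximizer, as in Appendix~\ref{app:topo}. Since each $e_{ij}$ is linear in $\pi$, the bound $\le\frac12$ passes to the limit.

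For $n=3$, the hypothesis holds automatically because $S$ is a single agent for every pair. Hence every ordered pair has envy probability at most $\frac12$, and the maximum-entropy decomposition is \decef.
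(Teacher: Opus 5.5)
Your proposal is correct and follows the paper's proof essentially step for step: the Gibbs form $\pi^*(\sigma)\propto\prod_kW_{k\sigma(k)}$ for $P>0$, averaging an optimal dual over permutations of the $S$-rows, summing out $S$ to obtain a zero-diagonal product-form coupling of $(P_i,P_j)$, Lemma~\ref{lem:product} applied in both directions, and passage to general $P$ through $P_\varepsilon=(1-\varepsilon)P+\varepsilon J/n$ with Hoffman's bound and Berge's theorem (Lemma~\ref{lem:maxentcont}). The one step you leave implicit is that the Gibbs distribution of \emph{every} optimal dual, in particular the averaged one, equals the unique primal optimum $\pi^*$; the paper also leaves this implicit, and it follows from strong duality.
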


This gives a canonical decomposition for the three-agent theorem of \citet{KSSY26}. For larger $n$,
the argument needs the remaining agents to have identical rows. Otherwise the joint law of a pair
is in general not of product form, and Section~\ref{sec:canonical} shows that maximum entropy can
then fail.

\subsection{All Agents but Two Alike}\label{sec:twoplus}

Our main general-$n$ result settles the first class with three preference types for every $n$.

\begin{theorem}\label{thm:twoplus}
Let $n\ge3$ and suppose that all agents except two, $u$ and $v$, share a common preference. Then the
maximum-entropy decomposition of every SD-EF matrix $P$ is \decef; in particular, $P$ is
EF-decomposable. If $P>0$, this decomposition is explicit: it draws the objects $(a,b)$ of $(u,v)$
from the unique coupling of $(P_u,P_v)$ of the form $Q_{ab}=U_aV_b$ for $a\ne b$ and $Q_{aa}=0$,
which matrix scaling computes, and it gives the remaining objects to the other $n-2$ agents by a
uniformly random bijection.
\end{theorem}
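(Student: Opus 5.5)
We plan to use the explicit description in the statement and check the three kinds of ordered pairs one at a time. Let $S=\N\setminus\{u,v\}$. By Observation~\ref{obs:rows}, all agents of $S$ have the same row $P_k=r$. Because $P$ is bistochastic, $r=(\mathbf 1-P_u-P_v)/(n-2)$.

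\emph{Reduction to $P>0$ and the explicit form.} First assume $P>0$. By the discussion before Corollary~\ref{cor:maxent3}, $\pi^*(\sigma)\propto\prod_kW_{k\sigma(k)}$, and we may take $W_k=w$ for all $k\in S$. Summing over the assignments of $S$ gives three facts. The pair $(\sigma(u),\sigma(v))$ has the product-form law $Q_{ab}=U_aV_b$ with $U_a=W_{ua}/w_a$ and $V_b=W_{vb}/w_b$. Given $(a,b)$, the weight of an assignment of $S$ to $\Ob\setminus\{a,b\}$ is $\prod_{c\ne a,b}w_c$, the same for every such bijection, so that bijection is uniform. Uniqueness of the coupling follows from uniqueness of matrix scaling, that is, of the maximum-entropy solution. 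For general $P$ we pass to $(1-\varepsilon)P+\varepsilon J/n$ and take the limit as $\varepsilon\to0$, exactly as for Corollary~\ref{cor:maxent3}, using continuity (Appendix~\ref{app:topo}). Every bound below is a closed condition, so it survives the limit.

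\emph{Easy pairs.} For $k,k'\in S$, the maximum-entropy decomposition is $G_P$-invariant, so $e_{kk'}=\frac12$ (Section~\ref{sec:tworows}). For the pairs $(u,v)$ and $(v,u)$, Corollary~\ref{cor:maxent3} applies directly.

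\emph{Pairs between $\{u,v\}$ and $S$.} Fix $k\in S$. Given $(a,b)$, the object $\sigma(k)$ is uniform on $\Ob\setminus\{a,b\}$. Hence
\[
e_{uk}=\frac{\mathbb E_Q\big[\rk_u(a)-[b\succ_u a]\big]}{n-2}=\frac{\mathbb E_{P_u}\rk_u-e_{uv}}{n-2}.
\]
This is also what the envy budget (Proposition~\ref{prop:budget}) gives by symmetry over $S$. So $e_{uk}\le\frac12$ is equivalent to
\[
e_{uv}\ \ge\ \mathbb E_{P_u}\rk_u-\tfrac{n-2}{2}.
\]
Next, SD-EF of $u$ against $k$ gives $\mathbb E_{P_u}\rk_u\le\mathbb E_r\rk_u$. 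Substituting $r=(\mathbf 1-P_u-P_v)/(n-2)$ turns this into
\[
(n-1)\,\mathbb E_{P_u}\rk_u+\mathbb E_{P_v}\rk_u\le\tfrac{n(n-1)}2 .
\]
The target inequality would follow if we show $e_{uv}\ge\frac12\big(\mathbb E_{P_u}\rk_u-\mathbb E_{P_v}\rk_u+1\big)$, or some variant of it, for product couplings. Refining the telescoping in Lemma~\ref{lem:product}, the difference $\Pr[a\succ_u b]-\Pr[b\succ_u a]$ is controlled by the prefix-sum gaps $\delta_k$. We would try to bound that difference from above in terms of the first-moment gap of the marginals, using SD-EF of $u$ against $v$ together with the constraint just derived.

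For $e_{ku}$ the analogous computation is
\[
e_{ku}=\frac{\mathbb E_Q\big[\#\{c\ne a,b: a\succ_k c\}\big]}{n-2}.
\]
Here the comparison goes through the common order $\succ_k$, and we must use SD-EF of $k$ against $u$, namely that $r$ dominates $P_u$ with respect to $\succ_k$, together with the position of $b$. The cases $(v,k)$ and $(k,v)$ are symmetric to $(u,k)$ and $(k,u)$.

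\emph{Main obstacle.} The cross pairs are the crux. The budget identity alone does not suffice, because $e_{uv}$ could be small. We need an inequality that ties the product coupling's scaling factors $U,V$ to the marginals. We expect this is where the paper's monotonicity lemma enters: view $(\log U,\log V)$, or the induced comparison of objects, as a weighted least-squares ranking. Then show that when $P_u$ puts more mass on $\succ_u$-better objects, the coupling correspondingly places $a$ above $b$ more often. We would first attempt a direct proof by perturbation. Move mass in $P_u$ between adjacent objects, track the derivative of the scaled coupling via the implicit function theorem, and show that the relevant envy quantity moves monotonically, so that the extreme case is $P_u=P_v=r$, where every bound equals $\frac12$.
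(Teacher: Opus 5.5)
Your reduction to $P>0$, the derivation of the explicit form, the pairs inside $A$ and the pair $\{u,v\}$ agree with the paper. Your formula $e_{uk}=(\mathbb E_{P_u}\rk_u-e_{uv})/(n-2)$ is also correct.

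The gap is the cross pairs, which carry the whole difficulty. For $(k,u)$ you give no argument. For $(u,k)$ the route you sketch cannot work: it reduces SD-EF of $u$ toward $k$ to its first-moment consequence and combines that only with product form and SD-EF of $u$ toward $v$. These facts do not imply $e_{uk}\le\frac12$.

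Take $n=3$, $1\succ_u2\succ_u3$, $3\succ_v2\succ_v1$, $1\succ_k3\succ_k2$, and rows $P_u=(0.3,0.5,0.2)$, $P_v=(0.2,0.4,0.4)$, $P_k=(0.5,0.1,0.4)$.
\begin{itemize}
\item Every SD-EF inequality holds except the first prefix inequality of $u$ toward $k$. In particular, $P_u$ dominates $P_v$, and your first-moment inequality holds with equality.
\item The zero-diagonal product coupling is $Q_{12}=t$, $Q_{13}=0.3-t$, $Q_{21}=Q_{32}=0.4-t$, $Q_{23}=0.1+t$, $Q_{31}=t-0.2$, where $t\approx0.228$ solves $Q_{12}Q_{23}Q_{31}=Q_{13}Q_{32}Q_{21}$.
\item It gives $e_{uk}=Q_{23}+Q_{31}+Q_{32}=0.3+t>\frac12$.
\end{itemize}
In this example your intermediate inequality even holds ($e_{uv}=0.6-t\approx0.372\ge0.35$), while the needed $e_{uv}\ge0.4$ fails. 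In general, getting from your intermediate inequality to the target requires $\mathbb E_{P_u}\rk_u+\mathbb E_{P_v}\rk_u\le n-1$, which your constraints do not give. The perturbation idea is only a sketch and does not say which SD-EF inequalities it would preserve.

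The missing idea is to bypass $e_{uv}$ and work with the full pair law of $u$ and an agent $x\in A$ (your $k$). Normalize $\sum_bV_b=1$. Then $\Pr[\sigma(u)=a,\sigma(x)=c]=\frac1mU_ak_{ac}$ with kernel $k_{ac}=1-V_a-V_c\ge0$. With the weighted Laplacian $L$ of this kernel, both the envy difference and every SD-EF prefix gap are linear in $U$: $m\Delta=s^{\top}U$ and $Y_\ell=m\sum_{a\le\ell}(p_a-r_a)=\mathbf 1_{\le\ell}^{\top}LU$. Take $\psi$ with $L\psi=s$. Symmetry of $L$ and Abel summation give $m\Delta=\sum_{\ell=1}^{n-1}Y_\ell(\psi_\ell-\psi_{\ell+1})$.

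The real work is Lemma~\ref{lem:ls}, which shows that this weighted least-squares ranking $\psi$ is monotone; it uses $V\ge0$ essentially. The weights $\psi_\ell-\psi_{\ell+1}$ are nonnegative but not constant. That is why every prefix inequality $Y_\ell\ge0$ is needed rather than their sum (the first moment), and SD-EF of $u$ toward $v$ plays no role for these pairs. The same identity in the order $\succ_A$, where SD-EF of $x$ toward $u$ gives $Y_\ell\le0$, handles $(x,u)$. Exchanging $U$ and $V$ handles the pairs with $v$. Finally, $\psi$ is not $\log U$ as you guessed; it is an auxiliary vector determined by $V$ and the order, used only as Abel-summation weights.
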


\paragraph{Proof overview} Let $A=\N\setminus\{u,v\}$ and $m=n-2$. By
Observation~\ref{obs:rows}, all agents in $A$ have a common row $r$. Pairs inside $A$ are
exchangeable, and the pair $\{u,v\}$ has a product-form joint law, so Lemma~\ref{lem:product}
applies. A pair $(u,x)$ with $x\in A$ is different. Its joint law turns out to be proportional to
$U_a(1-V_a-V_c)$, a product-form factor times an additive kernel. The envy difference and the prefix
differences that SD-EF makes non-negative are then linear functions of the vector $U$, and we show
that the former is a non-negative combination of the latter. The coefficients are the consecutive
differences of a weighted least-squares ranking, and their sign is the content of the following
lemma.

\begin{lemma}[Monotone least-squares rankings]\label{lem:ls}
Let $n\ge3$, and let $V\in\mathbb{R}^n_{\ge0}$ with $\sum_aV_a=1$ be different from every unit
vector. Put $k_{ac}=1-V_a-V_c$ for $a\neq c$, let $(Lf)_a=\sum_{c\ne a}k_{ac}(f_a-f_c)$ be the
weighted Laplacian, and let $s_a=\sum_{c>a}k_{ac}-\sum_{c<a}k_{ac}$. Then every solution of
$L\psi=s$ satisfies $\psi_1\ge\psi_2\ge\dots\ge\psi_n$.
\end{lemma}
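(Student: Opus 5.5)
The plan is to exploit the special additive form of the kernel $k_{ac}=1-V_a-V_c$. With it, $L$ becomes a diagonal matrix plus a rank-two correction, so $L\psi=s$ can be solved almost explicitly and the monotonicity reduced to a one-dimensional sign check.

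\emph{Well-posedness.} First I would check that the problem is well posed. Since $\sum_aV_a=1$, every weight $k_{ac}\ge0$, and $k_{ac}=0$ only if $V$ is supported on $\{a,c\}$. In that case, because $V$ is not a unit vector, every other edge $k_{ab}=1-V_a-V_b=V_c-V_b=V_c>0$, so the weighted graph on $n\ge3$ vertices is connected. Hence $\ker L$ is spanned by the all-ones vector, and $\psi$ is unique up to an additive constant; the claimed ordering is invariant under such shifts. Solvability follows from $\sum_as_a=0$, which holds because $s_a$ is an antisymmetric sum of $k_{ac}$.

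\emph{Explicit form of $L$ and $s$.} Writing $F=\sum_cf_c$ and $G=\sum_cV_cf_c$, a direct expansion gives
\[
(Lf)_a=(n-1-nV_a)f_a-(1-V_a)F+G .
\]
Similarly, with $T_{<a}=\sum_{c<a}V_c$,
\[
s_a=(n-2a)(1-V_a)+2T_{<a}.
\]
Shifting $\psi$ by $t$ changes $-(1-V_a)F+G$ by exactly $-t(n-1-nV_a)$. I would therefore fix the gauge so that the rank-two correction takes its most convenient form. The natural choice is $F=0$, which gives
\[
d_a\psi_a=s_a-G,\qquad d_a:=n-1-nV_a,
\]
where the scalar $G=\sum_cV_c\psi_c$ is then determined self-consistently by $G=\sum_cV_c(s_c-G)/d_c$. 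An alternative is a shift making $G=0$ when that is cleaner.

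\emph{Monotonicity.} The core step is to show $\psi_a\ge\psi_{a+1}$, i.e.
\[
\frac{s_a-G}{d_a}\ \ge\ \frac{s_{a+1}-G}{d_{a+1}} .
\]
I would compute
\[
s_a-s_{a+1}=2(1-V_a)-(n-2a-2)(V_{a+1}-V_a)\cdot(\text{sign bookkeeping})+\dots,
\]
then express the cross-multiplied difference as a sum of manifestly non-negative terms. The needed input is a two-sided bound on $G$: $G$ must lie in an interval determined by the values $s_c$, which I would obtain from the consistency equation by monotonicity of the map $G\mapsto\sum_cV_c(s_c-G)/d_c$.

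\emph{Expected main obstacle.} The hardest step is controlling the denominators $d_a$. They are positive except possibly for a single index with $V_a\ge(n-1)/n$, and at most one index can have $V_a>1/2$. The generic case with all $d_a>0$ should go through via the bound on $G$. For the exceptional heavy index $a^*$ (and the degenerate case $d_{a^*}=0$), I would first try a continuity argument. The solution $\psi$ (normalized, say, by $F=0$) depends continuously on $V$ on the set of non-unit probability vectors, since $L$ stays connected there. The set of $V$ with all $d_a\ne0$ is dense, and the inequalities $\psi_a\ge\psi_{a+1}$ are closed conditions. This reduces everything to $d_a\ne0$. If a negative $d_{a^*}$ still has to be treated directly, I would re-gauge so that $\psi_{a^*}$ is pinned at $0$ and solve for the remaining coordinates, which then all have positive denominators.
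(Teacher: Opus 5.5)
Your preliminary steps are correct and match the paper's reduction: the connectivity argument, the formula $(Lf)_a=d_af_a-(1-V_a)F+G$, the expression for $s_a$, and the gauge $F=0$ leaving one unknown scalar. Your plan to trap that scalar between two values via the affine consistency equation is also the paper's architecture. The gap is the monotonicity step, which is the entire content of the lemma and which you do not carry out. Your formula for $s_a-s_{a+1}$ is a placeholder; the correct value is $2k_{a,a+1}+(n-2a)(V_{a+1}-V_a)$. You also never identify the interval for $G$. The adjacent-comparison route is structurally awkward as well. Since $d_{a+1}-d_a=n(V_a-V_{a+1})$, the inequality $\psi_a\ge\psi_{a+1}$ with positive denominators reads $nG(V_a-V_{a+1})\le s_ad_{a+1}-s_{a+1}d_a$. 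This is a half-line condition on $G$ whose orientation flips with the sign of $V_a-V_{a+1}$. You would therefore have to show that the root of the consistency equation lies in an intersection of $n-1$ half-lines of both orientations, and nothing indicates how a ``manifestly non-negative'' decomposition would achieve this.

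The missing idea in the paper is to replace adjacent comparisons by a stronger statement that involves one coordinate at a time. For the mean-zero solution, write $\psi_a=1-\frac{2a}{n}+\frac2ny_a$. The system becomes $d_ay_a=h_a+\tau$ with $\sum_ay_a=\frac n2$, where $h_a=n\sum_{b<a}V_b-a$ and $\tau=\frac n2(1-G)$. One then proves $y\in[0,1]^n$, i.e.\ $\psi_a\in[1-\frac{2a}n,1-\frac{2(a-1)}n]$, which gives $\psi_a-\psi_{a+1}=\frac2n(1+y_a-y_{a+1})\ge0$. Because each constraint now involves a single coordinate, when all $d_a>0$ the admissible set is $\tau\in[-\min_ah_a,\ \min_a(d_a-h_a)]$. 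The normalization $\sum_ay_a=\frac n2$ is checked at the two endpoints using three facts: the cyclic recursion $h_{a+1}=h_a+nV_a-1$; the bound $\sum_{b\in B}(nV_b-1)\le n-|B|-nV_a$ for cyclic blocks $B\not\ni a$, together with $\frac{n-j-nV_a}{d_a}\le\frac{n-j}{n-1}$; and $\sum_{j=1}^{n-1}\frac{n-j}{n-1}=\frac n2$.

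Your treatment of the heavy index is also incomplete. Continuity only disposes of $d_{a^*}=0$, whereas $V_{a^*}>\frac{n-1}{n}$ is an open region. Pinning $\psi_{a^*}=0$ leaves a genuine rank-two system that you do not analyze. The paper handles this case directly: it shows $y_a\in[\frac{n-1-j}{n-1},\frac{n-j}{n-1}]$ for $a\ne a^*$ on an explicit parameter interval, then applies the intermediate value theorem to the affine function $\sum_ay_a$.
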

Since $\partial F/\partial\psi_a=2[(L\psi)_a-s_a]$ for
$F(\psi)=\sum_{a<c}k_{ac}(\psi_a-\psi_c-1)^2$, the solutions of $L\psi=s$ are the minimizers of
$F$. In words, the $k$-weighted least-squares scores for the ranking $1\succ\dots\succ n$ respect
that ranking.
\begin{proof}
If $k_{ac}=0$, then $V$ is supported on $\{a,c\}$ and every third index $b$ has $k_{ab},k_{cb}>0$.
So the weight graph is connected, and the solutions of $L\psi=s$ differ by additive constants.
They exist because $\sum_as_a=0$ by symmetry of $k$, so $s$ lies in the range of $L$. We show that
the solution with $\sum_a\psi_a=0$ is monotone.

\emph{Reduction to a scalar equation.} Put $\nu_a=nV_a\ge0$, so that $\sum_a\nu_a=n$, and let
$d_a=n-1-\nu_a$ and $h_a=n\sum_{b<a}V_b-a$. With $g_a=\frac12-V_a$ we have $k_{ac}=g_a+g_c$, and a
direct computation gives, for every $f$ with $\sum_af_a=0$,
\[
(Lf)_a=d_af_a-\textstyle\sum_bg_bf_b,\qquad s_a=(n-2a)(1-V_a)+2\textstyle\sum_{b<a}V_b .
\]
Write $\psi_a=1-\frac{2a}n+\frac2ny_a$. Then $\sum_a\psi_a=0$ iff $\sum_ay_a=\frac n2$. If
$\sum_a\psi_a=0$, then $L\psi=s$ reads $d_a\psi_a=s_a+\gamma$ with $\gamma=\sum_bg_b\psi_b$, and
substituting $\psi_a$ turns it into $d_ay_a=h_a+\tau$ with $\tau=\frac n2(\gamma+1)$. Conversely, if
$d_ay_a=h_a+\tau$ for all $a$ and $\sum_ay_a=\frac n2$, then $d_a\psi_a=s_a+\gamma'$ with
$\gamma'=\frac{2\tau}n-1$. Summing over $a$ and using $\sum_as_a=0$ and
$\sum_ad_a\psi_a=n\sum_ag_a\psi_a$ (as $d_a=ng_a+\frac n2-1$) shows $\gamma'=\sum_bg_b\psi_b$, so
$L\psi=s$. Hence it suffices to find a scalar $\tau$ and a vector $y\in[0,1]^n$ with
\begin{equation}\label{eq:scalar}
d_ay_a=h_a+\tau\quad\text{for all }a,\qquad \textstyle\sum_ay_a=\frac n2,
\end{equation}
since $\psi_a-\psi_{a+1}=\frac2n(1+y_a-y_{a+1})$.

\emph{A cyclic recursion.} Reading indices cyclically, $h_{a+1}=h_a+\nu_a-1$ for all $a$, where the
step from $a=n$ to $a=1$ uses $\sum_bV_b=1$. Let $a$ be an index and $B$ a set of
$j\in\{1,\dots,n-1\}$ cyclically consecutive indices with $a\notin B$. Since $\nu\ge0$ and
$\sum_b\nu_b=n$,
\begin{equation}\label{eq:key}
\begin{aligned}
&\textstyle\sum_{b\in B}(\nu_b-1)\le n-j-\nu_a,\\
&\tfrac{n-j-\nu_a}{d_a}\le\tfrac{n-j}{n-1}\quad\text{if } d_a>0,
\end{aligned}
\end{equation}
where the second inequality is equivalent to $\nu_a(1-j)\le0$. We also use
$\sum_{j=1}^{n-1}\frac{n-j}{n-1}=\frac n2$ and $\sum_{j=1}^{n-1}\frac{n-1-j}{n-1}=\frac n2-1$.

\emph{Case A: $d_a>0$ for all $a$.} Each $y_a(\tau)=(h_a+\tau)/d_a$ is increasing in $\tau$, so
\eqref{eq:scalar} has a unique solution $\tau^*$. Let $a_0$ minimize $h_a$ and put
$\tau_0=-h_{a_0}$, so that $y(\tau_0)\ge0$ and $y_{a_0}(\tau_0)=0$. For $a\ne a_0$, let
$B=\{a_0,a_0+1,\dots,a-1\}$ (cyclically) and $j=|B|$. Then $h_a+\tau_0=\sum_{b\in B}(\nu_b-1)$, and
\eqref{eq:key} gives $y_a(\tau_0)\le\frac{n-j}{n-1}$. As $a$ ranges over the indices other than
$a_0$, $j$ ranges over $\{1,\dots,n-1\}$, so $\sum_ay_a(\tau_0)\le\frac n2$. Hence $\tau^*\ge\tau_0$
and $y(\tau^*)\ge0$. Symmetrically, $\bar h_a=d_a-h_a$ satisfies $\bar h_a-\bar h_{a+1}=\nu_{a+1}-1$
(cyclically). Let $a_1$ minimize $\bar h_a$ and put $\tau_1=\bar h_{a_1}$. Then
$1-y_a(\tau_1)=(\bar h_a-\bar h_{a_1})/d_a\ge0$. For $a\ne a_1$, let $B=\{a+1,\dots,a_1\}$
(cyclically) and $j=|B|$. Then $\bar h_a-\bar h_{a_1}=\sum_{b\in B}(\nu_b-1)$, so
$1-y_a(\tau_1)\le\frac{n-j}{n-1}$ by \eqref{eq:key}, and again $j$ ranges over $\{1,\dots,n-1\}$.
Summing, $\sum_ay_a(\tau_1)\ge\frac n2$, so $\tau^*\le\tau_1$ and $y(\tau^*)\le1$.

\emph{Case B: $d_{a_0}\le0$ for some $a_0$.} Then $V_{a_0}\ge\frac{n-1}n$, so every other $V_a$ is at
most $\frac1n$, $a_0$ is unique, and $d_a\ge n-2>0$ for $a\ne a_0$. Put $\delta=-d_{a_0}$, which lies
in $[0,1)$ because $V$ is not a unit vector, and $t=h_{a_0}+\tau$. For $a\ne a_0$, let
$B=\{a_0,\dots,a-1\}$ (cyclically) and $j=|B|$. The recursion gives
$h_a+\tau=t+\delta+n-1-j+\rho_a$, where
$\rho_a=\sum_{b\in B\setminus\{a_0\}}\nu_b\in[0,1-\delta-\nu_a]$. Hence, for $t\in[-\delta,0]$,
\[
n-1-j\ \le\ h_a+\tau\ \le\ n-j-\nu_a\ \le\ d_a,
\]
so $y_a\in\big[\frac{n-1-j}{n-1},\frac{n-j}{n-1}\big]\subseteq[0,1]$ for all $a\ne a_0$, using
$d_a\le n-1$ and \eqref{eq:key}. If $\delta>0$, the equation for $a_0$ reads $y_{a_0}=-t/\delta$,
which lies in $[0,1]$ exactly for $t\in[-\delta,0]$. The sum $\sum_ay_a$ is affine in $t$. It is at
most $0+\frac n2$ at $t=0$ and at least $1+\frac n2-1$ at $t=-\delta$. By the intermediate value
theorem, some $t\in[-\delta,0]$ solves \eqref{eq:scalar} with $y\in[0,1]^n$. If $\delta=0$, the
equation for $a_0$ forces $t=0$, and $y_{a_0}=\frac n2-\sum_{a\ne a_0}y_a\in[0,1]$ by the same two
bounds.
\end{proof}
The hypothesis $V\ge0$ is essential. For general additive weights $k_{ac}=g_a+g_c$ with $g\ge0$, which
correspond to allowing $\nu_a<0$, the least-squares ranking can fail to be monotone (Appendix~\ref{app:lsremarks}).

\begin{proof}[Proof of Theorem~\ref{thm:twoplus}]
\emph{Reduction to $P>0$.} The matrix $(1-\varepsilon)P+\varepsilon J/n$ is positive, SD-EF, and its
rows outside $\{u,v\}$ are identical. The maximum-entropy decomposition depends continuously on
$P$, so we may assume $P>0$.

\emph{The decomposition.} Let $p=P_u$ and $q=P_v$, so that $mr=\mathbf 1-p-q$. As $P>0$, some
decomposition of $P$ has full support, and its law of $(\sigma(u),\sigma(v))$ is a coupling of
$(p,q)$ that is positive off the diagonal. By the matrix-scaling theorem for prescribed zero
patterns~\cite{SK67,Men68,Bru68}, there are $U,V>0$ such that $Q_{ab}=U_aV_b$ ($a\ne b$) and $Q_{aa}=0$
define a coupling of $(p,q)$, and this coupling is unique. Draw $(a,b)\sim Q$, give $a$ to $u$ and
$b$ to $v$, and give the remaining objects to $A$ by a uniformly random bijection. Agent $x\in A$
then receives $o$ with probability $\sum_{a\ne b}Q_{ab}[o\notin\{a,b\}]/m=(1-p_o-q_o)/m=r_o$, so this
is a decomposition of $P$. It is the maximum-entropy decomposition: by the discussion before
Corollary~\ref{cor:maxent3}, the latter has weights $W_x=w$ for $x\in A$, so its law of
$(\sigma(u),\sigma(v))$ is a zero-diagonal product-form coupling of $(p,q)$, hence equal to $Q$, and
given $(\sigma(u),\sigma(v))=(a,b)$ its weight $\prod_{o\ne a,b}w_o$ does not depend on how $A$ is
assigned.

\emph{Pairs inside $A$ and the pair $\{u,v\}$.} The law is invariant under permutations of $A$, so
$e_{xx'}=\frac12$ for $x,x'\in A$. By SD-EF, $p$ stochastically dominates $q$ with respect to
$\succ_u$, so Lemma~\ref{lem:product} yields $e_{uv}\le\frac12$; symmetrically, $e_{vu}\le\frac12$.

\emph{Pairs of $u$ and $x\in A$.} Rescale so that $\sum_bV_b=1$ and put $k_{ac}=1-V_a-V_c\ge0$.
For $a\ne c$, $\Pr[\sigma(u)=a,\sigma(x)=c]=\frac1m\sum_{b\ne a,c}Q_{ab}=\frac1mU_ak_{ac}$. Number the
objects according to $\succ_u$, and let $L$ and $s$ be as in Lemma~\ref{lem:ls}, which applies
because $V>0$ and $n\ge3$. Let $\Delta=\Pr[\sigma(u)\succ_u\sigma(x)]-\Pr[\sigma(x)\succ_u\sigma(u)]$.
The quantity $D_{ac}=k_{ac}(U_a-U_c)$ is antisymmetric, so for every $\ell$,
\begin{gather*}
Y_\ell:=m\textstyle\sum_{a\le\ell}(p_a-r_a)=\sum_{a\le\ell<c}D_{ac}=\mathbf 1_{\le\ell}^{\!\top}LU,\\
m\Delta=\textstyle\sum_{a<c}D_{ac}=s^{\!\top}U.
\end{gather*}
Let $\psi$ solve $L\psi=s$. Since $L$ is symmetric and $Y_n=0$, Abel summation gives
$s^{\!\top}U=\psi^{\!\top}LU=\sum_{\ell=1}^{n-1}Y_\ell(\psi_\ell-\psi_{\ell+1})$. SD-EF of $u$
toward $x$ gives $Y_\ell\ge0$, and Lemma~\ref{lem:ls} gives $\psi_\ell\ge\psi_{\ell+1}$. Hence
$\Delta\ge0$, that is, $e_{ux}\le\frac12$. For $e_{xu}$, number the objects according to $\succ_A$
instead, so that the same computation gives $m\Delta'=s^{\!\top}U$ for
$\Delta'=\Pr[\sigma(u)\succ_A\sigma(x)]-\Pr[\sigma(x)\succ_A\sigma(u)]$. SD-EF of $x$ toward $u$ now
gives $Y_\ell\le0$, so $\Delta'\le0$, which is $e_{xu}\le\frac12$.

\emph{Pairs of $v$ and $x\in A$.} After rescaling so that $\sum_aU_a=1$, the law of
$(\sigma(v),\sigma(x))$ is $\frac1mV_b(1-U_b-U_c)$, and the same argument applies with $U$ and $V$
exchanged. Lemma~\ref{lem:ls} applies because $U>0$.
\end{proof}
Theorem~\ref{thm:twoplus} contains $n=3$~\cite{KSSY26} as the case $m=1$, and for $n=5$ and $n=6$ it
covers the patterns $(3,1,1)$ and $(4,1,1)$ checked in Section~\ref{sec:computation}. The
decomposition is computable in polynomial time by matrix scaling, so in this class a canonical
decomposition provably works.

\section{Canonical Decompositions and Approximation}\label{sec:canonical}

The decompositions of Section~\ref{sec:general} depend on $P$ alone, whereas the certificates of
Section~\ref{sec:computation} are optimized for the given profile. We now show that the natural
$P$-only decompositions are not \decef in general, and record the approximation guarantees that
mixing arguments give.

\paragraph{Mixtures} Envy probabilities are linear in the decomposition, so bounds mix
linearly: if $P=\sum_t\mu_tM_t$ with bistochastic $M_t$ admitting $\alpha_t$-\decef decompositions, then
mixing these decompositions gives an $(\sum_t\mu_t\alpha_t)$-\decef decomposition of $P$. The
matrices $M_t$ need not be SD-EF. Theorem~\ref{thm:tworows} supplies components with
$\alpha_t=\frac12$, and any bistochastic component has $\alpha_t\le1$. The largest weight that $P$
can put on two-row components is computable by an LP for each bipartition of the agents
(Appendix~\ref{app:tworowlp}). The uniform matrix alone already gives a bound.

\begin{proposition}\label{prop:nearuniform}
Every bistochastic $P$ admits a $\frac{1+\theta}2$-\decef decomposition, where
$\theta=1-n\min_{i,o}P_{io}$.
\end{proposition}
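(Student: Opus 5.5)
Write $\beta=\min_{i,o}P_{io}$, so that $\theta=1-n\beta$. Every entry of $P$ is at least $\beta$, and each row sums to $1$, so $n\beta\le 1$ and $\theta\in[0,1]$. The plan is to use the mixture principle stated just above. We split $P$ into a uniform component and a remainder:
\[
P=(1-\theta)\,\frac{J}{n}+\theta\,M,\qquad M=\frac{P-(1-\theta)J/n}{\theta}\quad(\theta>0).
\]
Since $(1-\theta)/n=\beta$, the numerator is $P-\beta J$. This matrix is entrywise non-negative by the choice of $\beta$, and each of its row and column sums equals $1-n\beta=\theta$. Hence $M$ is bistochastic. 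If $\theta=0$, then $P=J/n$ and nothing beyond the first component is needed.

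Next we assign a decomposition to each component. For $J/n$ we take the uniform distribution over $\Perm$. By the computation in the proof of Corollary~\ref{cor:half}, $(\sigma(i),\sigma(j))$ is then a uniformly random ordered pair of distinct objects. So $e_{ij}=\frac12$ for every ordered pair and every profile, and this decomposition is $\frac12$-\decef. For $M$ we take an arbitrary Birkhoff--von~Neumann decomposition, which exists by the Birkhoff--von~Neumann theorem. Every envy probability is at most $1$, so it is trivially $1$-\decef.

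Finally we combine the two. The mixture $(1-\theta)\pi_{\mathrm{unif}}+\theta\,\pi_M$ decomposes $P$, because decompositions mix linearly. Envy probabilities are also linear in the decomposition, so for every $i\ne j$,
\[
e_{ij}\le(1-\theta)\cdot\tfrac12+\theta\cdot 1=\tfrac{1+\theta}2 .
\]
There is no real obstacle here. The only points to check are that $M$ is non-negative with the correct line sums, and the degenerate case $\theta=0$.
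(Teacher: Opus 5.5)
Your proof is correct and follows the paper's own argument: write $P=(1-\theta)J/n+\theta M$, decompose $J/n$ by the uniform distribution over $\Perm$, under which every envy probability is $\frac12$, decompose $M$ arbitrarily, and mix. Your explicit check that $M=(P-\beta J)/\theta$ is non-negative with all line sums equal to $1$ fills in a detail the paper only asserts.
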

\begin{proof}
If $\theta=0$, then $P=J/n$ and Corollary~\ref{cor:half} applies. Otherwise
$P=(1-\theta)J/n+\theta M$, where $M=(P-(1-\theta)J/n)/\theta$ is bistochastic. Mix the uniform
distribution over $\Perm$, under which every envy probability is $\frac12$, with weight $1-\theta$
and any decomposition of $M$ with weight $\theta$.
\end{proof}

\paragraph{Greedy decomposition} Greedy Birkhoff--von~Neumann can be as bad as any decomposition,
even on matrices whose optimum is close to $\frac12$.

\begin{proposition}\label{prop:circulant}
Let $\lambda_0>\lambda_1>\dots>\lambda_{n-1}>0$ sum to $1$. Let $P_{i,i+k}=\lambda_k$ (indices mod $n$), and let
agent $i$ rank $i\succ i+1\succ\dots\succ i+n-1$. Then $P$ is SD-EF and has a unique decomposition with
$n$ permutations, the cyclic shifts, which greedy Birkhoff--von~Neumann returns. In it,
agent $i$ envies $i-1$ with probability $1-\lambda_0$. With $\lambda_k=\frac1n+\varepsilon(\frac{n-1}2-k)$,
this tends to $\frac{n-1}n$, the worst possible value~\cite{KSSY26}, whereas
$\tstar(P)\le\frac12+O(n^2\varepsilon)$ by Proposition~\ref{prop:nearuniform}.
\end{proposition}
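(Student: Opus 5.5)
We prove the claims of Proposition~\ref{prop:circulant} in order: SD-EF, uniqueness of the small decomposition, what greedy returns, the envy probability, and the two asymptotic statements.

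\emph{SD-EF.} Agent $i$ holds $i+k$ with probability $\lambda_k$, and $i+k$ is its rank-$k$ object. So $\sum_{o\in U_i^\ell}P_{io}=\lambda_0+\dots+\lambda_{\ell-1}$. Agent $j=i+d$ holds $i+k$ with probability $\lambda_{k-d \bmod n}$. Its mass on $U_i^\ell$ is therefore a sum of $\ell$ distinct $\lambda$'s. Because the $\lambda$'s are strictly decreasing, no $\ell$ of them sum to more than $\lambda_0+\dots+\lambda_{\ell-1}$, which gives SD-EF.

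\emph{Uniqueness.} The matrix has $n^2$ positive entries, and each permutation covers only $n$ of them. Any decomposition therefore needs at least $n$ permutations. A decomposition with exactly $n$ permutations must cover every entry exactly once, so its permutations partition the $n\times n$ grid. The weight of each permutation then equals each entry it covers, so all entries on a permutation are equal. Equal entries lie on a single cyclic diagonal $\{(i,i+k)\}$, since the $\lambda_k$ are distinct. Hence each permutation is exactly the shift $\sigma_k:i\mapsto i+k$, with weight $\lambda_k$.

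\emph{Greedy.} By induction, the remaining matrix after peeling shifts $0,\dots,k-1$ is the circulant with entries $\lambda_k,\dots,\lambda_{n-1}$ on the remaining diagonals and zeros elsewhere. The bottleneck of any permutation is at most the minimum over the diagonals it touches. A permutation touching only the top diagonal $k$ is $\sigma_k$, with bottleneck $\lambda_k$. Any other permutation supported on positive entries must use some diagonal $k'>k$, so its bottleneck is at most $\lambda_{k'}<\lambda_k$. Greedy therefore picks $\sigma_k$ with weight $\lambda_k$ and zeroes that diagonal. After $n$ steps it has returned the cyclic-shift decomposition.

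\emph{Envy.} Under $\sigma_k$, agent $i$ gets $i+k$ and agent $i-1$ gets $i-1+k$. In $i$'s order, $i-1+k$ has rank $k-1$ for $k\ge1$, which beats $i+k$ at rank $k$. For $k=0$, agent $i-1$ gets $i-1$, which is $i$'s last object. So $i$ envies $i-1$ exactly when $k\ne0$, giving probability $1-\lambda_0$.

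\emph{Asymptotics.} With the stated $\lambda_k$, we have $\lambda_0=\frac1n+\varepsilon\frac{n-1}2$. Hence $1-\lambda_0\to\frac{n-1}n$ as $\varepsilon\to0$. For the bound on $\tstar$, note that $\min P=\lambda_{n-1}=\frac1n-\varepsilon\frac{n-1}2$. This gives $\theta=n\varepsilon\frac{n-1}2$, and Proposition~\ref{prop:nearuniform} yields $\tstar(P)\le\frac12+O(n^2\varepsilon)$.

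The greedy step is the most delicate. If it gets stuck, the fallback is to note that the bottleneck is bounded by the smallest diagonal touched, as argued above.
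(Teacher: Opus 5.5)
Your proof is correct and follows the same route as the paper. It uses top-$\ell$ sums of the row entries for SD-EF, the exact-cover count with distinct $\lambda_k$ forcing the cyclic shifts, and the rank-$(k-1)$ versus rank-$k$ comparison for envy; your induction showing that $\sigma_k$ is the unique maximum-bottleneck permutation at step $k$ also fills in the greedy claim, which the paper only asserts.
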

\begin{proof}
$P_i(U_i^k)=\lambda_0+\dots+\lambda_{k-1}$ is the sum of the $k$ largest entries of any row, which gives SD-EF.
$P$ is positive, so $n$ permutation matrices must cover each of the $n^2$ cells exactly once. Since
the $\lambda_k$ are distinct, each permutation is a shift $s_k$ with weight $\lambda_k$, and greedy peels them
off in order. Under $s_k$, agent $i$ gets its rank-$k$ object, while $i-1$ holds $i-1+k$, which $i$ ranks
$k-1$ for $k\ge1$. So $i$ envies $i-1$ exactly when $k\ge1$.
\end{proof}
Greedy also fails on unstructured instances: on a four-agent SD-EF matrix found by our search, it
reaches envy probability $\frac{19}{28}$ while $\tstar=\frac{25}{56}$ (Appendix~\ref{app:examples}).

\paragraph{Maximum entropy} Given everyone else's objects, the maximum-entropy decomposition
swaps the objects $a,b$ of two agents $i,j$ with odds $W_{ia}W_{jb}:W_{ib}W_{ja}$, and
Section~\ref{sec:general} shows that it is \decef for $n=3$, for two distinct rows, and when all
agents but two share a preference. It is not \decef in general.

\begin{proposition}\label{obs:natural}
There is a four-agent SD-EF matrix $P$ with $\tstar(P)=\frac7{16}$ whose maximum-entropy
decomposition has an envy probability of $\frac{25}{48}>\frac12$.
\end{proposition}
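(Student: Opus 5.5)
This is an existence statement, so I will prove it by exhibiting an explicit instance and certifying its three properties exactly. The search comes first. By Section~\ref{sec:general}, maximum entropy is \decef whenever all agents but two have identical rows, so the instance must have four pairwise distinct preferences, and most likely four distinct rows. I would run the adversarial pipeline of Section~\ref{sec:computation} with a different objective. For each four-agent profile orbit with four types, and each vertex of $\sdef(R)$ (at most $375$ per orbit), I compute the maximum-entropy decomposition numerically and record its largest envy probability. I pick a vertex where this exceeds $\frac12$, preferring one with small denominators. If no vertex works, I would also try midpoints of edges, or convex combinations with $J/n$. Vertices are the natural first attempt because they have many zeros, so the support of $P$ is sparse.

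Sparsity also makes the maximum-entropy decomposition checkable in closed form. If the support of $P$ contains only a few permutations, say $s$ of them, then the decompositions of $P$ form a low-dimensional polytope in $\mathbb{R}^s$. Often it is a segment parametrized by one variable $x$, with $\pi(\sigma)$ affine in $x$.

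The certification then has four steps.

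\begin{enumerate}
\item Verify SD-EF directly by checking the $n(n-1)(n-1)=36$ prefix inequalities.
\item List all permutations $\sigma$ with $Q^\sigma\le$ the support of $P$, and parametrize all decompositions.
\item Maximize the entropy over this parametrization. In the one-parameter case the optimality condition is
\[
\sum_\sigma \pi'(\sigma)\log\pi(\sigma)=0,
\]
which becomes a polynomial equation once the coefficients are integers. For example, it can take the form $\prod_{\sigma}\pi(\sigma)^{c_\sigma}=1$ with $\sum_\sigma c_\sigma=0$. I expect the instance to be chosen so that this equation has a rational root, which would explain the value $\frac{25}{48}$. Strict concavity of entropy makes the maximizer unique, so checking the first-order condition at an interior point suffices. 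Equivalently, I can check the product form $\pi^*(\sigma)\propto\prod_k W_{k\sigma(k)}$ on the support.
\item Evaluate all twelve $e_{ij}$ under $\pi^*$ and exhibit the pair with $e_{ij}=\frac{25}{48}$.
\end{enumerate}

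To show $\tstar(P)=\frac7{16}$, I argue the two directions separately.
\begin{itemize}
\item For the upper bound, I exhibit a rational decomposition with every $e_{ij}\le\frac7{16}$. It is obtained from the LP of Lemma~\ref{lem:dual} re-solved in exact arithmetic.
\item For the lower bound, I give a dual certificate $(\mu,Y)$. Here $\mu$ is a distribution on ordered pairs, and it must satisfy $\sum_iY_{i\sigma(i)}\le\sum\mu_{ij}\env_{ij}(\sigma)$ for all $24$ permutations, with $\langle Y,P\rangle=\frac7{16}$.
\end{itemize}
A simpler lower bound may also be available from the envy budget of Proposition~\ref{prop:budget}. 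If the decomposition space is one-dimensional, the lower bound follows by minimizing a piecewise-linear maximum over an interval.

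The main obstacle is the first step: finding an instance where the maximum entropy decomposition is exactly computable. In general it is transcendental, since it comes from matrix scaling. I would therefore restrict the search to vertices whose support admits a decomposition space of dimension at most one, and solve the entropy condition symbolically there.
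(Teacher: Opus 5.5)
\textbf{Missing witness.} What you give is a search-and-certify recipe, not a proof. The proposition is an existence statement with two specific constants. Your plan never produces a matrix. Nothing in it guarantees that the search succeeds, let alone that it returns one with $\tstar(P)=\frac7{16}$ and an envy probability of exactly $\frac{25}{48}$.

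The missing ingredient is the witness itself. The paper takes objects $a,b,c,d$ and preferences $b\succ_1a\succ_1d\succ_1c$, $b\succ_2a\succ_2c\succ_2d$, $c\succ_3a\succ_3b\succ_3d$, $d\succ_4c\succ_4b\succ_4a$. It sets $P=\frac1{16}$ times the matrix with rows $(6,7,0,3)$, $(6,7,3,0)$, $(3,0,13,0)$, $(1,2,0,13)$. Then:
\begin{itemize}
\item only six assignments fit the support, and the decompositions form the segment $\pi_s=\pi_0+s(0,1,-1,-1,0,1)$ with $s\in[-\frac1{24},\frac1{12}]$;
\item the entropy condition reads $\pi(\sigma_2)\pi(\sigma_6)=\pi(\sigma_3)\pi(\sigma_4)$ and holds at $s=0$;
\item $e_{12}(\pi_s)=\frac{25}{48}-s$.
\end{itemize}
This single envy function gives both $e_{12}(\pi_0)=\frac{25}{48}$ and the lower bound $\tstar(P)\ge\frac7{16}$ on the whole segment. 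At $s=\frac1{12}$ the other eleven envy probabilities are at most $\frac7{16}$.

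Once the instance is on the table, your steps 1--4 and your one-dimensional lower-bound argument are exactly the right certification. The obstacle you worry about also disappears. The kernel direction has two entries $+1$ and two entries $-1$, so the condition $(\pi_2+s)(\pi_6+s)=(\pi_3-s)(\pi_4-s)$ is linear in $s$. Hence the entropy maximizer is rational for any rational matrix with this support.

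\textbf{Search strategy.} Your search is also aimed in the wrong place. The vertex reduction of Lemma~\ref{lem:vertex} rests on convexity of $\tstar$. The map $P\mapsto e_{ij}(\pi^*(P))$ has no such property, so there is no reason for violations to sit at vertices.

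In fact the paper's $P$ lies in the relative interior of a two-dimensional face of $\sdef(R)$. Write $x=P_{1a}=P_{2a}$ and $y=P_{1b}=P_{2b}$. The face is the quadrilateral $\frac13\le y\le\frac12$, $\frac{y}{2}\le x\le y$, and along it $e_{12}(\pi^*)=x+y-\frac{xy}{1-y}$. This equals $\frac12$ at three vertices and $\frac5{12}$ at the fourth, and all other envy probabilities are at most $\frac12$ at these vertices. Yet it equals $\frac{25}{48}$ at $(x,y)=(\frac38,\frac7{16})$, so a vertex-only search would miss this instance.

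Your edge-midpoint fallback would find a violation on this face. For instance, $x=y=\frac5{12}$ gives $e_{12}(\pi^*)=\frac{15}{28}$ with $\tstar=\frac5{12}$. That proves a variant with different constants, not the proposition as stated.
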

\begin{proof}
Let the objects be $a,b,c,d$ and the preferences
$b\succ_1a\succ_1d\succ_1c$, $b\succ_2a\succ_2c\succ_2d$, $c\succ_3a\succ_3b\succ_3d$ and
$d\succ_4c\succ_4b\succ_4a$. Let
\[
P=\tfrac1{16}\begin{pmatrix}6&7&0&3\\6&7&3&0\\3&0&13&0\\1&2&0&13\end{pmatrix}
\quad(\text{columns } a,b,c,d),
\]
which is SD-EF. Exactly six assignments are compatible with the support of $P$:
$\sigma_1=(a,b,c,d)$, $\sigma_2=(b,a,c,d)$, $\sigma_3=(b,c,a,d)$, $\sigma_4=(d,a,c,b)$,
$\sigma_5=(d,b,c,a)$ and $\sigma_6=(d,c,a,b)$, where $\sigma=(x_1,\dots,x_4)$ means that agent $k$ receives $x_k$. The
decompositions of $P$ form the segment $\pi_s=\pi_0+s(0,1,-1,-1,0,1)$ with
$\pi_0=(\frac38,\frac7{24},\frac7{48},\frac1{12},\frac1{16},\frac1{24})$. The maximum-entropy
decomposition is the unique member of product form, characterized by
$\pi(\sigma_2)\pi(\sigma_6)=\pi(\sigma_3)\pi(\sigma_4)$, which is $\pi_0$. Agent~1
envies agent~2 exactly in $\sigma_1,\sigma_4,\sigma_5$, so $e_{12}(\pi_s)=\frac{25}{48}-s$ for
$s\in[-\frac1{24},\frac1{12}]$. In particular, $e_{12}(\pi_0)=\frac{25}{48}>\frac12$, while
$s=\frac1{12}$ gives $\tstar(P)=\frac7{16}$ (confirmed by an exact LP).
\end{proof}
SD-EF constrains $P$, not the likelihood ratios $W_{ia}W_{jb}/(W_{ib}W_{ja})$ that govern the
maximum-entropy decomposition. In this example the two agents outside the pair $\{1,2\}$ have
different rows, which is exactly the situation that Corollary~\ref{cor:maxent3} excludes. By
Theorems~\ref{thm:tworows} and~\ref{thm:twoplus}, maximum entropy is \decef for every four-agent
profile with at most three types, so a four-agent counterexample needs four distinct preferences,
as here.

\paragraph{Dependent rounding} A natural route to a uniform bound would compare the joint
probabilities $\Pr[\sigma(i)=a,\sigma(j)=b]$ with $P_{ia}P_{jb}$, as for independent draws. No such
comparison holds within a constant factor. Let $C$ be the permutation matrix of the cyclic shift
$i\mapsto i+1$. For $\varepsilon\le\frac12$, the matrix $(1-\varepsilon)I+\varepsilon C$ is SD-EF
whenever each agent $i$ ranks object $i$ first and object $i+1$ second. It has a unique
decomposition, and in it these ratios equal $1/\varepsilon$.
Dependent rounding~\cite{GKPS06}, whose negative-correlation guarantees concern edges at a common
vertex rather than pairs of disjoint edges, also violates \decef in our experiments, reaching $0.52$
at $n=5$ (Appendix~\ref{app:examples}).

\section{Computational Complexity}\label{sec:complexity}

A designer may want to know whether an existing lottery, for instance one produced by a
constrained mechanism~\cite{BCKM13}, can be implemented without concentrating ex-post envy. Such a
lottery need not be SD-EF, and on SD-EF inputs the question is trivial if Conjecture~\ref{conj:main}
holds. For arbitrary random assignments it is hard, in parallel with the NP-completeness of deciding
whether a random assignment has an ex-post efficient decomposition~\cite{AMXY15}.

\begin{theorem}\label{thm:np}
Given a profile and a rational bistochastic matrix $P$, deciding whether $P$ is EF-decomposable is
strongly NP-complete. Hardness holds even if every entry of $P$ lies in $\{0,\frac16,\frac14,\frac13\}$ and $P$ is
block diagonal with $5\times5$ blocks. Moreover, computing $\tstar(P)$ within an
additive error $1/\mathrm{poly}(n)$ is NP-hard.
\end{theorem}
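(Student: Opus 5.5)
Membership in NP follows from Carathéodory's theorem. If $P$ is EF-decomposable, the LP defining $\tstar(P)$ with $t=\frac12$ has a basic feasible solution. Its support has at most $n^2+n(n-1)+1$ permutations, and its rational weights have polynomial bit size. Such a decomposition is therefore a certificate that can be checked in polynomial time.

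For hardness, I would reduce from a strongly NP-complete problem with bounded-size local structure. A natural choice is \textsc{3-Dimensional Matching}, or \textsc{Exact Cover by 3-Sets} with each element in exactly three sets. Those "exactly three" variants explain entries like $\frac13$. The block-diagonal restriction with $5\times5$ blocks means the decomposition factors into independent decompositions of each block, with arbitrary correlation between blocks. Envy only occurs within blocks, because an agent in one block never receives an object of another. So a single block can't encode a global constraint unless agents' preferences compare objects across blocks. Block diagonality forbids such comparisons, since $P$ has zeros off-block, and hence within each block the envy probability depends only on that block's decomposition. So hardness must come from correlations \emph{within} a block across several of its agent pairs. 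This is impossible for a fixed $5\times5$ block unless the number of blocks enters through a shared threshold.

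I would therefore reconsider the setup. The realized envy of $i$ toward $j$ in different blocks is possible: $j$'s object lies in another block, and $i$ ranks all objects. So cross-block pairs matter, and $e_{ij}=\Pr[\sigma(j)\succ_i\sigma(i)]$ depends on the \emph{joint} law of the two blocks' local assignments. Each block is a gadget with a few local permutations (the support of a $5\times5$ block with entries in $\{0,\frac16,\frac14,\frac13\}$), corresponding to choices such as a variable's truth value or a set being selected. Cross-block agent pairs with suitable preferences yield constraints of the form "$\Pr[\text{local choice } x \text{ in block } A \text{ and choice } y \text{ in block } B]\le\frac12$". Envy probabilities are linear in the joint distribution, so the question is whether a coupling of the block marginals exists satisfying these pairwise joint constraints. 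I would design the gadgets so that the marginals force, for example, probability $\frac13$ on each of three choices. The cross-block envy constraints at threshold $\frac12$ then force the coupling to be supported on globally consistent combinations, namely exact covers or satisfying assignments of a 3-coloring or 3-SAT-type instance. This is similar to Aziz et al.'s reduction for Pareto-optimal decompositions.

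The main obstacle is this gadget design. A coupling meeting the pairwise bounds must imply an integral solution, and conversely a uniform mixture over solutions (for example, symmetrized colorings) must meet every bound at exactly $\frac12$. I would try to encode graph 3-coloring: each vertex is a block with three equiprobable local states, and each edge is a pair of agents whose envy occurs exactly when the endpoints' states coincide or conflict, tuned so that $\frac12$ is achievable iff a proper coloring exists. Inapproximability would follow by making the gap robust. When no consistent coupling exists, some constraint must exceed $\frac12$ by at least $1/\mathrm{poly}(n)$, which follows from the LP having polynomially bounded bit complexity, since $\tstar$ is then a rational with polynomially bounded denominator.
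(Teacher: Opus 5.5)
\textbf{Verdict: there is a genuine gap.} Your membership argument is fine and matches the paper's: a basic feasible solution of the LP has polynomially many permutations in its support. The hardness part, however, stops where the proof actually begins. You eventually settle on the paper's framework: a 3-colouring reduction, one block per vertex, three equiprobable local states, and envy between agents of different blocks. (Your earlier claim that block diagonality confines envy to blocks was wrong, and you rightly retracted it.) But you never give preferences that make the threshold $\frac12$ equivalent to properness, and you leave this as ``the main obstacle''.

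Note that the naive encoding ``$i$ envies $j$ iff the endpoints' states coincide'' only yields $\Pr[c(u)=c(v)]\le\frac12$, which does not force a proper colouring. The missing idea is to use both directions of an edge. Give the colour agent $a_v$ the interleaved ranking $X^3_{N(v)}\succ x_v^3\succ X^2_{N(v)}\succ x_v^2\succ X^1_{N(v)}\succ x_v^1$, so that $a_u$ envies $a_v$ iff $c(v)\ge c(u)$. The two directed envy events then cover every outcome and overlap exactly on the monochromatic ones, giving $e_{a_ua_v}+e_{a_va_u}=1+\Pr[c(u)=c(v)]$. Requiring both to be at most $\frac12$ forces every permutation in the support to be a proper colouring. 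Conversely, applying a uniformly random permutation of the three colours to a fixed proper colouring makes $(c(u),c(v))$ uniform over ordered pairs of distinct colours, which gives exactly $\frac12$.

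You also need to complete each block to a bistochastic $5\times5$ matrix without creating new violations. The paper adds four identical fillers with $\frac16$ on each colour object and $\frac14$ on two private objects they rank first. Then every pair involving a filler has envy probability at most $\frac12$, since such envy requires the filler to hold a colour object. In addition, $a_v$ ranks the objects of non-adjacent blocks below everything it can receive. Without these pieces the reduction is not established.

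Your inapproximability argument also fails as stated. The optimal value of the LP is a ratio of determinants of $O(n^2)\times O(n^2)$ basis matrices. Its denominator is therefore only guaranteed to be at most $2^{\mathrm{poly}(n)}$, not polynomial, so bit-complexity reasoning gives only an exponentially small gap. The $1/\mathrm{poly}(n)$ gap comes from the combinatorics of the construction. In a no-instance every permutation in the support has a monochromatic edge, so $\sum_{e}\Pr[e\text{ monochromatic}]\ge1$, and some edge is monochromatic with probability at least $1/|E|$. The identity above then gives $\tstar(P)\ge\frac12+\frac1{2|E|}$, while $\tstar(P)=\frac12$ in a yes-instance.
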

\begin{proof}
\emph{Membership.} EF-decomposability is feasibility of an LP with $2n^2-n$ constraints and $0/1$
coefficients, so a feasible basic solution has at most $2n^2-n$ permutations in its support, each
with a weight of polynomial encoding length. The support and the weights form a certificate.

\emph{Construction.} Let $G=(V,E)$ be an instance of 3-\textsc{Colouring}, which is
NP-complete~\cite{GJS76}. Every vertex $v$ gets a \emph{colour agent} $a_v$, four \emph{filler} agents $f_v^1,\dots,f_v^4$,
\emph{colour objects} $x_v^1,x_v^2,x_v^3$ and two objects $h_v^1,h_v^2$. We set $P(a_v,x_v^t)=\frac13$,
$P(f_v^q,x_v^t)=\frac16$, $P(f_v^q,h_v^s)=\frac14$, and all other entries to $0$. Every row and
column sums to one. Each filler ranks $h_v^1,h_v^2,x_v^1,x_v^2,x_v^3$ first, in this order. Writing
$X^t_{N(v)}$
for the colour-$t$ objects of $v$'s neighbours, $a_v$ ranks
\[
X^3_{N(v)}\succ x_v^3\succ X^2_{N(v)}\succ x_v^2\succ X^1_{N(v)}\succ x_v^1
\]
first. All remaining objects follow in a fixed order, below those listed. Every permutation in the
support of a decomposition gives each agent an object of its own block, and it gives $a_v$ some
$x_v^{c(v)}$, which defines a colouring $c$. For an edge $uv$, the agent $a_u$ envies $a_v$ iff
$c(v)\ge c(u)$, and vice versa. Hence, for every decomposition $\pi$,
\begin{equation}\label{eq:star}
e_{a_ua_v}(\pi)+e_{a_va_u}(\pi)=1+\Pr_\pi[c(u)=c(v)].
\end{equation}

\emph{Decomposable implies 3-colourable.} If $P$ is EF-decomposable, the left-hand side of
\eqref{eq:star} is at most $1$ for every edge. So no edge is monochromatic in any assignment in the
support, and every such assignment yields a proper 3-colouring.

\emph{3-colourable implies decomposable.} Let $\chi$ be a proper colouring. Draw $\tau\in S_3$
uniformly, give $a_v$ the object $x_v^{c(v)}$ with $c(v)=\tau(\chi(v))$, and, independently for each
$v$, give the four leftover objects of block $v$ to its fillers by a uniformly random bijection.
Since $c(v)$ is uniform, $a_v$ receives each $x_v^t$ with probability $\frac13$. Each $h_v^s$ is
always left over, so a filler receives it with probability $\frac14$, and $x_v^t$ is left over with
probability $\frac23$, so a filler receives it with probability $\frac16$. Thus this decomposes $P$.
Now fix an ordered pair $(i,j)$ with $i$ in block $v$. If $i$ is a filler and $j$ lies outside block
$v$, or $i=a_v$ and $j$ lies in a block $w\notin N(v)\cup\{v\}$, then $i$ ranks $j$'s object below
every object that $i$ can receive, so $i$ never envies $j$. If $i=a_v$ and $j=a_w$ with $vw\in E$, then $i$ envies
$j$ iff $c(w)>c(v)$, and $(c(v),c(w))$ is uniform over the six ordered pairs of distinct colours, so
the probability is $\frac12$. If $i=a_v$ and $j$ is a filler of $v$ or of a neighbour, envy requires
$j$ to hold a colour object, which has probability $3\cdot\frac16=\frac12$. If $i$ is a filler of $v$
and $j=a_v$, envy requires $i$ to hold a colour object, again with probability $\frac12$. Finally, two
fillers of $v$ have identical preferences and the law is invariant under swapping them, so each
envies the other with probability $\frac12$. All numbers in the construction are constants, which
gives strong NP-hardness.

\emph{Inapproximability.} In a no-instance, every assignment in the support has a monochromatic
edge, so some edge is monochromatic with probability at least $1/|E|$, and \eqref{eq:star} gives
$\tstar(P)\ge\frac12+\frac1{2|E|}$. In a yes-instance, $\tstar(P)=\frac12$.
\end{proof}
The gap in Theorem~\ref{thm:np} becomes a constant if one starts from the hardness of
3-colouring with perfect completeness and a constant fraction of violated edges~\cite{Pet94}.
Hardness persists even on SD-EF inputs once the max-envy objective is replaced by a weighted sum:

\begin{proposition}\label{prop:fas}
Given a profile, a bistochastic $P$, weights $\mu_{ij}\in\{0,1\}$ and a rational $K$, deciding whether
some decomposition of $P$
has $\sum_{i\ne j}\mu_{ij}e_{ij}(\pi)\le K$ is NP-complete. This holds even if all agents have the same
preference and $P=J/n$, which is SD-EF.
\end{proposition}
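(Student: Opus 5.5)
We would reduce from \textsc{Feedback Arc Set}: given a digraph $D=(V,E)$ on $k$ vertices and an integer $K_0$, decide whether some linear order of $V$ has at most $K_0$ backward arcs. This problem is NP-complete~\cite{GJS76}.

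\emph{Membership.} We argue as in the membership part of Theorem~\ref{thm:np}. Feasibility of $\{\pi\ge0:\sum_\sigma\pi(\sigma)Q^\sigma=P,\ \sum\mu_{ij}e_{ij}(\pi)\le K\}$ is an LP with polynomially many constraints and $0/1$ coefficients. A basic feasible solution therefore has polynomial support and polynomial-size weights, and it serves as a certificate.

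\emph{Construction.} All agents share one preference $o_1\succ\dots\succ o_n$. Then $i$ envies $j$ in $\sigma$ exactly when $j$ is placed ahead of $i$, so each $\sigma$ is a ranking of the agents. Take $n=k+N$ agents: one agent per vertex of $D$ and $N$ dummy agents, with $N$ chosen so that $n>2k|E|$. Set $\mu_{ij}=1$ for each arc $(i,j)\in E$ and $\mu_{ij}=0$ otherwise, and let $P=J/n$. The objective $\sum\mu_{ij}e_{ij}(\pi)$ is then the expected number of arcs of $D$ whose head is ranked ahead of their tail, that is, the expected number of backward arcs of the induced linear order on $V$. Put $K=K_0+\frac12$.

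\emph{Lower bound.} Every permutation induces a linear order on $V$ with at least $\mathrm{fas}(D)$ backward arcs. Hence every decomposition, in fact every distribution over $\Perm$, has objective at least $\mathrm{fas}(D)$. If $\mathrm{fas}(D)>K_0$, the objective is at least $K_0+1>K$.

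\emph{Upper bound.} This is the main obstacle, because the marginals must be exactly uniform and so no single ranking can be used. Let $L$ be an optimal linear order of $V$. Consider the ranking that lists $V$ in the order $L$, followed by the $N$ dummies, and take the uniform mixture over its $n$ cyclic rotations of positions. Each agent occupies each position exactly once among the $n$ rotations, so this mixture decomposes $J/n$.

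A pair $i,j\in V$ with $i$ before $j$ in $L$ sits at cyclic distance $d<k$. Its relative order is inverted only in the $d$ rotations in which the wrap-around separates the two agents. So each arc is flipped with probability less than $k/n$. The expected number of backward arcs is therefore at most $\mathrm{fas}(D)+|E|k/n<\mathrm{fas}(D)+\frac12$. If $\mathrm{fas}(D)\le K_0$, this is at most $K$.

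The construction is polynomial, since $N=O(k|E|)$. This proves NP-hardness even for identical preferences and $P=J/n$. The uniform matrix $J/n$ is SD-EF for every profile by Corollary~\ref{cor:half}'s setting.
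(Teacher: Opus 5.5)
Your proof is correct and follows essentially the same route as the paper: a reduction from \textsc{Feedback Arc Set} with all agents sharing one preference, padding with dummy agents, a fractional threshold strictly below the next integer cost, and the uniform mixture of the $n$ cyclic shifts of an optimal order followed by the dummies, which decomposes $J/n$. The differences are only bookkeeping: you bound each arc's flip probability by $k/n$ and take $n>2k|E|$ and $K=K_0+\frac12$, whereas the paper bounds the cost of each of the at most $r-1$ splitting shifts by $|A|$ and takes $n=(r-1)|A|+1$. One small correction: in your membership argument, the coefficients of the objective constraint are small integers rather than $0/1$, which does not change the argument.
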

\begin{proof}
Membership follows as in Theorem~\ref{thm:np}. With identical preferences,
$\mathrm{cost}(\sigma)=\sum_{(i,j)}\mu_{ij}\env_{ij}(\sigma)$ counts the arcs of the digraph
$A=\{(i,j):\mu_{ij}=1\}$ that point backward in the order that $\sigma$ induces on the agents. We
reduce from \textsc{Feedback Arc Set}~\cite{Kar72}: given a digraph on $r$ vertices with a
non-empty arc set $A$ and an integer $k$, is there an order with at most $k$ backward arcs? Add
dummy agents up to $n=(r-1)|A|+1\ge r$, let all agents rank $o_1\succ\dots\succ o_n$, and set
$K=k+(r-1)|A|/n<k+1$. If some decomposition meets $K$, then some permutation
in its support has integer cost at most $K$, hence at most $k$. Conversely, take an order with at
most $k$ backward arcs, followed by the dummies, and give the element in position $\ell$ the object
$o_{1+((\ell+s)\bmod n)}$ for a uniformly random shift $s$. This decomposes $J/n$. At most $r-1$ of
the $n$ shifts split the real agents, each with cost at most $|A|$, and the others cost at most $k$,
so the expected cost is at most $K$.
\end{proof}
With identical preferences, the weighted envy of a single assignment is the total weight of its
backward arcs, so the pricing problem of column generation for $\tstar$ (Lemma~\ref{lem:dual})
contains the weighted linear ordering problem.

\section{Conclusion}
We have settled the four-agent case of the open problem of \citet{KSSY26}. Beyond four agents,
all SD-EF matrices for five agents with at most four preference types are EF-decomposable, as are
all PS matrices for five agents, and adversarial search found no counterexample for up to seven
agents. The structural lemmas explain why a computer search is
conclusive here. $\tstar$ is convex, so vertices suffice. Every certificate is a small rational
decomposition, so the proof is checkable. The envy budget shows that \decef can only fail through
\emph{concentration} of envy, never through its total amount.

Three directions stand out. First, a proof of Conjecture~\ref{conj:main} needs a decomposition
beyond greedy Birkhoff--von~Neumann and maximum entropy, which fail in general
(Section~\ref{sec:canonical}), while symmetrization works when the row structure is coarse
(Theorem~\ref{thm:tworows}). The least-squares technique behind Theorem~\ref{thm:twoplus} is a
candidate route to profiles with more types. It would have to handle pair laws that are not of
product form, which arise once the agents outside a pair have two or more distinct rows. Second,
the bound $\tstar(P)\le\frac{n-1}n$ of \citet{KSSY26} tends to $1$, and no constant $\alpha<1$
independent of $n$ is known. A polynomial-time rounding that achieves such a constant would give a
usable guarantee without the full conjecture. Third, on SD-EF inputs the complexity of computing
$\tstar$ and of \emph{finding} a \decef decomposition is open. Our hardness results for general
inputs and for weighted objectives settle neither question.

\bibliographystyle{ACM-Reference-Format}
\bibliography{references}

\appendix
\section{Supplementary Proofs}\label{app:proofs}

\subsection{Two topological facts}\label{app:topo}

\begin{lemma}\label{lem:closed}
For every profile, the set $\{P:\tstar(P)\le\frac12\}$ of EF-decomposable matrices is a polytope, and in
particular it is closed.
\end{lemma}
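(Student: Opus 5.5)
The plan is to write the set of EF-decomposable matrices as a linear projection of a polytope in a lifted space. Consider the polytope
\[
\mathcal{D}=\Big\{\pi\in\mathbb{R}^{\Perm}_{\ge0}:\ \textstyle\sum_\sigma\pi(\sigma)=1,\ \sum_\sigma\pi(\sigma)\env_{ij}(\sigma)\le\tfrac12\ \ \forall i\ne j\Big\}.
\]
It is cut out by finitely many linear inequalities in the finite-dimensional space $\mathbb{R}^{\Perm}$. It is bounded, since it lies inside the probability simplex, so it is a polytope. It is non-empty, because by Corollary~\ref{cor:half} the uniform distribution lies in it.

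Next, consider the linear map $\Phi:\pi\mapsto\sum_\sigma\pi(\sigma)Q^\sigma$. By definition, $P$ is EF-decomposable exactly when some $\pi\in\mathcal{D}$ satisfies $\Phi(\pi)=P$. The minimum in the definition of $\tstar$ is attained, and $\tstar(P)\le\frac12$ is equivalent to the existence of a $\frac12$-\decef decomposition. Hence
\[
\{P:\tstar(P)\le\tfrac12\}=\Phi(\mathcal{D}).
\]
Every $P$ in this image is automatically bistochastic, so there is no need to intersect with the Birkhoff polytope separately.

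It remains to check that $\Phi(\mathcal{D})$ is a polytope. The linear image of a polytope is a polytope, namely the convex hull of the images of its finitely many vertices. A polytope is compact, so $\Phi(\mathcal{D})$ is closed.

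There is no real obstacle here. The only point needing care is the identification of the sublevel set with $\Phi(\mathcal{D})$, which rests on the minimum in $\tstar$ being attained (as noted in the preliminaries). If one prefers to avoid that, closedness alone also follows from lower semicontinuity: $\tstar$ is convex and piecewise linear on the Birkhoff polytope by Lemma~\ref{lem:vertex}, hence continuous there, so its sublevel sets are closed. The projection argument is still the cleaner route, and it gives the stronger polytope statement directly.
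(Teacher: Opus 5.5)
Your proof is correct and is the paper's argument: the paper also writes the set as the image of the polytope $\{\pi\ge0:\sum_\sigma\pi(\sigma)=1,\ \sum_\sigma\pi(\sigma)\env_{ij}(\sigma)\le\frac12\}$ under $\pi\mapsto\sum_\sigma\pi(\sigma)Q^\sigma$, and notes that a linear image of a polytope is a polytope. Your additional remarks are sound but not needed for the conclusion: that the minimum defining $\tstar$ is attained, that $\mathcal{D}$ is non-empty, and the alternative route through continuity of $\tstar$.
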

\begin{proof}
It is the image of the polytope $\{\pi\in\mathbb{R}^{\Perm}_{\ge0}:\sum_\sigma\pi(\sigma)=1,\
\sum_\sigma\pi(\sigma)\env_{ij}(\sigma)\le\frac12\ \forall i\ne j\}$ under the linear map
$\pi\mapsto\sum_\sigma\pi(\sigma)Q^\sigma$. A linear image of a polytope is a polytope.
\end{proof}

\begin{lemma}\label{lem:maxentcont}
For bistochastic $P$, let $\pi^*(P)$ be the unique maximizer of the entropy
$H(\pi)=-\sum_\sigma\pi(\sigma)\log\pi(\sigma)$ over the decompositions of $P$. The map
$P\mapsto\pi^*(P)$ is continuous on the Birkhoff polytope.
\end{lemma}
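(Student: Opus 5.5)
The plan is to get continuity from Berge's maximum theorem applied to the correspondence $D(P)=\{\pi\in\mathbb{R}^{\Perm}_{\ge0}:\sum_\sigma\pi(\sigma)Q^\sigma=P\}$. This correspondence has non-empty compact values on the Birkhoff polytope $\mathcal B$. The objective $H$ is continuous on the simplex $\Delta(\Perm)$ (with $0\log0=0$) and does not depend on $P$. If $D$ is both upper and lower hemicontinuous, Berge gives that the argmax correspondence is upper hemicontinuous. Because the maximizer is unique by strict concavity, an upper hemicontinuous singleton-valued correspondence into the compact set $\Delta(\Perm)$ is a continuous function, which proves the lemma.

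Upper hemicontinuity, or equivalently a closed graph in a compact range, is immediate. The graph $\{(P,\pi):\sum_\sigma\pi(\sigma)Q^\sigma=P,\ \pi\ge0\}$ is closed, and all values lie in the compact simplex. So the substance is lower hemicontinuity: given $P_k\to P$ and $\pi\in D(P)$, we need $\pi_k\in D(P_k)$ with $\pi_k\to\pi$.

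This is the step where I would invoke Hoffman's error bound. The set $D(P)$ is the solution set of the linear system $A\pi=\mathrm{vec}(P)$, $\pi\ge0$, where the matrix $A$ is fixed and has columns $\mathrm{vec}(Q^\sigma)$. Hoffman's bound yields a constant $\kappa$, depending only on $A$, such that for every right-hand side $b$ with non-empty solution set and every $\pi\ge0$,
\[
\operatorname{dist}\bigl(\pi,\{\pi'\ge0:A\pi'=b\}\bigr)\le\kappa\,\|A\pi-b\|.
\]
I would apply this with $b=\mathrm{vec}(P_k)$, which is feasible by Birkhoff--von~Neumann, and with the given $\pi\in D(P)$. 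This gives $\pi_k\in D(P_k)$ satisfying $\|\pi_k-\pi\|\le\kappa\|P-P_k\|\to0$. The sum constraint is automatic, because the row sums of $\sum_\sigma\pi_k(\sigma)Q^\sigma$ already force $\sum_\sigma\pi_k(\sigma)=1$.

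The main obstacle is checking that Hoffman's constant is uniform in the right-hand side. The standard statement gives exactly this uniformity for fixed $A$ and all $b$ with non-empty solution set. Once that is confirmed, the remaining steps are routine. As a fallback, one could also argue directly that $D$ is a polyhedral correspondence and hence Lipschitz in the Hausdorff metric.
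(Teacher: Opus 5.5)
Your proposal is correct and follows essentially the same route as the paper. Both arguments apply Berge's maximum theorem to the feasible-set correspondence: upper hemicontinuity comes from the closed graph with compact values, lower hemicontinuity from Hoffman's error bound with a constant depending only on the fixed constraint matrix, and the uhc argmax is single-valued by strict concavity, hence continuous. You also spell out a detail the paper leaves implicit, namely that $\sum_\sigma\pi(\sigma)=1$ follows from the marginal constraints.
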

\begin{proof}
Consider the feasible-set correspondence
\[F(P)=\{\pi\ge0:\textstyle\sum_\sigma\pi(\sigma)Q^\sigma=P\}.\]
It has non-empty compact values and a closed graph, so it is upper hemicontinuous. It is also lower
hemicontinuous. By Hoffman's error bound~\cite{Hof52}, there is a constant $\kappa$ with the following property.
For every $\pi\in F(P)$ and every bistochastic $P'$, some $\pi'\in F(P')$ satisfies
$\|\pi-\pi'\|\le\kappa\|P-P'\|$. The entropy is
continuous and strictly concave, so by Berge's maximum theorem~\cite{Ber63} the argmax correspondence
is upper hemicontinuous and single-valued, hence continuous.
\end{proof}

Since every $e_{ij}$ is linear in $\pi$, Lemma~\ref{lem:maxentcont} implies the following. If
$P_\varepsilon\to P$ and the maximum-entropy decompositions of all $P_\varepsilon$ are $\alpha$-\decef,
then so is that of $P$. We use this in two places:
\begin{itemize}
\item with $P_\varepsilon=(1-\varepsilon)P+\varepsilon J/n$ in Corollary~\ref{cor:maxent3};
\item with the same $P_\varepsilon$ in Theorem~\ref{thm:twoplus}.
\end{itemize}
Both properties that matter are preserved: $P_\varepsilon$ is SD-EF for the same profile, since the SD-EF
matrices form a convex set containing $J/n$, and agents with identical rows in $P$ keep identical rows.

\subsection{The maximum-entropy form (Corollary~\ref{cor:maxent3})}

Let $P>0$. The dual of entropy maximization shows $\pi^*(\sigma)\propto\prod_kW_{k\sigma(k)}$ with $W>0$,
where $\log W$ is any optimal solution of the convex dual
$\min_Y\log\sum_\sigma\exp(\sum_kY_{k\sigma(k)})-\langle Y,P\rangle$. If all agents in $S=\N\setminus\{i,j\}$ have
identical rows in $P$, the dual objective is invariant under permuting the rows of $Y$ indexed by $S$.
Averaging an optimal $Y$ over these permutations gives an optimal $Y$ whose $S$-rows coincide, by
convexity. The primal optimum $\pi^*$ is unique, so we may assume $W_k=w$ for all $k\in S$. Summing over
the assignments of $S$ to $\Ob\setminus\{a,b\}$ gives
\[
\begin{aligned}
\Pr_{\pi^*}[\sigma(i)=a,\sigma(j)=b]&\ \propto\ W_{ia}W_{jb}\,|S|!\prod_{c\ne a,b}w_c\\
&\ \propto\ \frac{W_{ia}}{w_a}\cdot\frac{W_{jb}}{w_b}\qquad(a\neq b),
\end{aligned}
\]
which is a zero-diagonal product-form coupling of $(P_i,P_j)$. SD-EF gives $P_i\succeq_{SD}P_j$ with respect
to $\succ_i$, so Lemma~\ref{lem:product} yields $e_{ij}\le\frac12$; symmetrically $e_{ji}\le\frac12$. For
general $P$, apply Lemma~\ref{lem:maxentcont}. For $n=3$ the complement of any pair is a single agent.

\subsection{The product-form coupling (Theorem~\ref{thm:twoplus})}

Let $p=P_u$, $q=P_v$, and let $r$ be the common row of the $m=n-2$ agents in $A=\N\setminus\{u,v\}$, and
assume $P>0$.

\emph{Existence.} Write $P=(1-\eta)P'+\eta J/n$ with $P'$ bistochastic and $\eta>0$ small. Mixing any
decomposition of $P'$ with the uniform distribution over $\Perm$ gives a decomposition of $P$ with full
support. The induced law of $(\sigma(u),\sigma(v))$ is a coupling of $(p,q)$ that is positive on every
off-diagonal cell. By the matrix-scaling theorem for prescribed zero patterns~\cite{SK67,Men68,Bru68},
there are $U,V>0$ such that $Q_{ab}=U_aV_b$ ($a\ne b$), $Q_{aa}=0$, has row sums $p$ and column sums
$q$. It is computed by alternating row and column scaling.

\emph{Uniqueness and maximum entropy.} By the previous subsection (with $S=A$), the maximum-entropy
decomposition has a product-form $(u,v)$-marginal $Q'$ with marginals $(p,q)$. A zero-diagonal
product-form coupling with given positive marginals is unique, because it is the entropy maximizer
among couplings with that zero pattern. So $Q'=Q$. Given $(\sigma(u),\sigma(v))=(a,b)$, the conditional law
of the assignment of $A$ is proportional to $\prod_{x\in A}w_{\sigma(x)}=\prod_{o\ne a,b}w_o$, which is uniform.
The construction was also checked end to end by explicit enumeration (script
\texttt{construction\_\allowbreak check.py}): on $450$ random SD-EF instances with $n=4,5,6$, its marginals equal
$P$ up to $10^{-13}$ and every envy probability is at most $\frac12$.

\subsection{Remarks on Lemma~\ref{lem:ls}}\label{app:lsremarks}

\emph{The unit-vector case.} Lemma~\ref{lem:ls} excludes $V=e_m$, which cannot occur in Theorem~\ref{thm:twoplus} because there $V>0$. If $V=e_m$, then $k_{mb}=0$ for all $b$
and $s_m=0$, so $\psi_m$ is free. On $\Ob'=\Ob\setminus\{m\}$ all weights equal $1$ and
$s_b=n-2r_b$, where $r_b$ is the rank of $b$ in $\Ob'$. The solution with mean zero on $\Ob'$ is
$\psi_b=s_b/(n-1)$, which is strictly decreasing, and choosing $\psi_{m+1}\le\psi_m\le\psi_{m-1}$ gives a
monotone solution. The minimum-norm solution, which sets $\psi_m=0$, is in general not monotone.

\emph{A quantitative by-product.} The proof of Lemma~\ref{lem:ls} shows $y\in[0,1]^n$, hence
$0\le\psi_a-\psi_{a+1}\le\frac4n$ for the solution with $\sum_a\psi_a=0$.

\emph{The sign condition is essential.} Both uses of $\nu\ge0$ in inequality~\eqref{eq:key} are
needed. For general additive weights $g_a+g_c$ with $g\ge0$, which corresponds to allowing $\nu_a<0$, we
found explicit instances in which the least-squares ranking is not monotone. For example, $n=6$ and
$g=(0,0,0,\frac1{10},0,2)$ give a connected weight graph and, in exact arithmetic,
$\psi_5-\psi_4=\frac{93}{3647}>0$.

\emph{Verification.} All identities in the proof were also verified symbolically (sympy,
$n=3,\dots,7$), and the conclusion was checked in exact rational arithmetic on $4{,}836$ instances covering
both cases of the proof and the unit-vector case (script \texttt{lemma\_\allowbreak proof\_\allowbreak check.py}).

\subsection{The two-row LP (Section~\ref{sec:canonical})}\label{app:tworowlp}

For a fixed bipartition $(A,B)$ of the agents, the maximal weight $\mu$ with which $P$ contains a
component that has rows $r_A$ on $A$ and $r_B$ on $B$ and satisfies the hypothesis of Theorem~\ref{thm:tworows} is an LP. Its variables are the scaled rows $\tilde r_A=\mu r_A$,
$\tilde r_B=\mu r_B$ and the weight $\mu$. Its constraints are:
\begin{itemize}
\item $\tilde r_A,\tilde r_B\ge0$, with row sums $\mu$ and $|A|\tilde r_A+|B|\tilde r_B=\mu\mathbf 1$;
\item the expected-rank constraints $\sum_o\tilde r_{A,o}\rk_i(o)\le\mu\frac{n-1}2$ for $i\in A$, and likewise
for $B$;
\item $\tilde r_A\le P_i$ entrywise for $i\in A$, and likewise for $B$.
\end{itemize}

\section{Computational Details}\label{app:comp}

\subsection{Pipeline and what is certified}

For a profile $R$, the SD-EF polytope $\sdef(R)$ has the following H-representation:
\begin{itemize}
\item $n^2$ non-negativity constraints;
\item $2n$ equalities (row and column sums);
\item $n(n-1)^2$ prefix inequalities.
\end{itemize}
Its vertices are enumerated with the double-description method of cddlib~\cite{cddlib} in exact GMP
rational arithmetic. For each vertex $P$ we solve
\[
\min\ t\quad\text{s.t.}\quad \textstyle\sum_\sigma\pi_\sigma Q^\sigma=P,\ \ \sum_\sigma\pi_\sigma\env_{ij}(\sigma)\le t\ (i\ne j),\ \ \pi\ge0
\]
in floating point (HiGHS). If the value exceeds $\frac12-10^{-7}$, we re-solve the LP \emph{exactly},
restricted to the support of the floating-point solution. If that fails, we solve the full LP exactly. The
output is a rational $\pi$ satisfying all constraints with $t\le\frac12$. In the type-reduced runs every
vertex exceeds the threshold, because agents with identical preferences force $\tstar=\frac12$ (Corollary~\ref{cor:half}), and for $n=4$ a separate pass certifies every vertex. For $n=4$ we store all $26{,}927$ certificates, and an
independent $40$-line checker verifies each one using Python's \texttt{Fraction} only:
\begin{itemize}
\item marginals equal $P$;
\item $P$ is SD-EF;
\item every envy probability is at most $\frac12$;
\item the stored orbits cover all $24^4$ profiles.
\end{itemize}
For $n\ge5$ the certificates are regenerated on demand rather than stored.

\emph{Type reduction.} If $\succ_i=\succ_j$, SD-EF in both directions gives $P_i(U^k)=P_j(U^k)$ for every prefix
$U^k$, hence $P_i=P_j$. So $\sdef(R)$ is affinely isomorphic to a polytope with one row per distinct
preference and column constraints $\sum_t m_tP_t=\mathbf 1$, where $m_t$ are the multiplicities. This is what
makes $n=5$ with four types and $n=6$ with pattern $(2,2,2)$ tractable.

\emph{Orbits.} Profiles are enumerated up to renaming agents and objects. A profile is encoded by its
multiset of preference indices after relabeling objects so that one of its preferences becomes the identity.
The canonical form is the lexicographically smallest such encoding. The counts are
$10$ ($n=3$), $762$ ($n=4$) and $1{,}876{,}255$ ($n=5$). The orbit counts per multiplicity pattern are
$7{,}021$ for $(3,1,1)$ and for $(2,2,1)$, $273{,}819$ for $(2,1,1,1)$ ($n=5$), and $86{,}067$ for $(2,2,2)$ and $258{,}121$ for $(4,1,1)$ ($n=6$).

\subsection{Runs}

\begin{table*}[t]
\caption{Exact runs. Every vertex (for PS, every PS matrix) received a rational certificate of
$\tstar\le\frac12$. The last row is a random sample. The pattern $(4,1,1)$ is covered for every $n$ by
Theorem~\ref{thm:twoplus}.}
\small
\begin{tabular}{@{}lrrrl@{}}
\toprule
class & orbits & vertices & max/orbit & hardware, time\\
\midrule
$n=4$, all & 762 & 26{,}927 & 375 & 24 cores, $<1$ min\\
$n=5$, $(3,1,1)$ & 7{,}021 & 661{,}039 & 403 & 6 procs, 16 min\\
$n=5$, $(2,2,1)$ & 7{,}021 & 937{,}832 & 529 & 6 procs, 19 min\\
$n=5$, $(2,1,1,1)$ & 273{,}819 & 213{,}980{,}202 & 11{,}910 & 96+8 procs, $\approx$1 day\\
$n=6$, $(2,2,2)$ & 86{,}067 & 50{,}833{,}539 & 3{,}140 & 90 procs, 7 h\\
$n=5$, PS & 1{,}876{,}016 & -- & -- & 10 procs, 8 min\\
$n=6$, $(4,1,1)$, sample & 300 of 258{,}121 & 148{,}771 & 1{,}833 & 6 min\\
\bottomrule
\end{tabular}
\end{table*}

\emph{Distributed run.} For pattern $(2,1,1,1)$, the orbit list was generated once and split into
$2{,}709$ chunks of $100$ orbits. The $3{,}000$ orbits already checked in a preliminary random sample were
excluded and counted separately. Workers on a $96$-core server and on a workstation claimed chunks through
an atomic \texttt{mkdir} in a shared claim directory: the server claimed in ascending order and the
workstation in descending order. A final merge confirmed that every chunk was completed exactly once.
One chunk, whose worker was interrupted, was recomputed.

\emph{Exact PS run.} The run behind Theorem~\ref{thm:ps5} (script
\texttt{certify\_\allowbreak rule\_\allowbreak exact.py}) certifies all $1{,}876{,}016$ orbits exactly. It
re-solves the LP in exact arithmetic on the support of the floating-point solution (the fallback to the
full exact LP was never needed) and re-checks each rational decomposition with Python's
\texttt{Fraction}: nonnegative weights, marginals equal to the exact PS matrix, and every envy
probability at most $\frac12$. All certificates pass. An earlier floating-point screen (script
\texttt{run\_\allowbreak rule\_\allowbreak exhaustive.py}) had already certified exactly the
$1{,}561{,}882$ orbits whose floating-point value exceeds $\frac12-10^{-7}$. Among the other
$314{,}134$ orbits (floating-point value at most $\frac12-10^{-7}$), the largest exact $\tstar$ is
$\frac{40}{81}$.

\emph{Adversarial ascent.} For random profiles we start from a random vertex of $\sdef(R)$, maximizing a
random linear objective. We then alternate between (i) solving the min-max LP and reading off the dual
multipliers $Y$ of the marginal constraints, and (ii) moving to the vertex of $\sdef(R)$ maximizing $\langle Y,\cdot\rangle$.
By Lemma~\ref{lem:dual}, the value never decreases. Runs:
\begin{itemize}
\item $n=5$: $20{,}000$ random profiles and $5{,}000$ three-type profiles, $30$ restarts each;
\item $n=6$: $3{,}000$ random profiles and $2{,}000$ three-type profiles, $20$ restarts each;
\item $n=7$: $1{,}300$ random profiles, $10$ restarts each.
\end{itemize}
The maximum was $\frac12$ up to floating-point error in all of them.

\subsection{Examples}\label{app:examples}

\emph{Greedy Birkhoff--von~Neumann (Section~\ref{sec:canonical}).} Objects $a,b,c,d$; preferences
$a\succ_1b\succ_1c\succ_1d$, $a\succ_2c\succ_2d\succ_2b$, $c\succ_3b\succ_3d\succ_3a$, $d\succ_4a\succ_4b\succ_4c$, and
\[
P=\tfrac1{28}\begin{pmatrix}9&8&8&3\\9&4&8&7\\5&4&12&7\\5&12&0&11\end{pmatrix}.
\]
$P$ is SD-EF. Repeatedly removing a maximum-bottleneck permutation (our implementation breaks ties by the
assignment solver) yields the decomposition
$bacd\,(\frac8{28})$, $acdb\,(\frac7{28})$, $cdab\,(\frac5{28})$, $dbca\,(\frac3{28})$, $cdba\,(\frac2{28})$,
$abcd$, $acbd$, $cabd$ ($\frac1{28}$ each), where $xyzw$ means that agents $1,\dots,4$ receive $x,y,z,w$. In it, agent~2 envies agent~1 with probability
$\frac{19}{28}$, whereas $\tstar(P)=\frac{25}{56}$ (exact LP).

\emph{Maximum entropy (Proposition~\ref{obs:natural}).} The example is reproduced by
\texttt{examples.py}, which checks exactly that $P$ is SD-EF, that $\pi_0$ has marginals $P$, that
$\pi_0(\sigma_2)\pi_0(\sigma_6)=\pi_0(\sigma_3)\pi_0(\sigma_4)$, that $e_{12}(\pi_0)=\frac{25}{48}$, and that $\tstar(P)=\frac7{16}$.

\emph{Other natural decompositions.} Bipartite dependent rounding~\cite{GKPS06} with a deterministic
(lexicographic) choice of cycles reaches maximum envy $0.522\pm0.004$ on an $n=5$ SD-EF instance ($60{,}000$
samples, 95\% interval). The version with random cycle choice reaches about $0.51$. A random search found maximum-entropy
envy up to $0.502$ at $n=5$.

\section{Reproducibility}\label{app:repro}
\begingroup\raggedright\sloppy

All code and result summaries are provided as ancillary files of this arXiv submission (directory \texttt{anc/code/}). It requires Python~3.10+,
\texttt{numpy}, \texttt{scipy} (HiGHS) and \texttt{pycddlib}~3 (cddlib with GMP). Main entry points:
\begin{itemize}
\item \texttt{run\_\allowbreak exhaustive.py 4}, \texttt{certify\_\allowbreak n4.py} and \texttt{verify\_\allowbreak certificates.py}: Theorem~\ref{thm:n4};
\item \texttt{run\_\allowbreak typed\_\allowbreak exhaustive.py 5 3,1,1} (and \texttt{2,2,1}), and \texttt{chunk\_\allowbreak worker.py} for
$(2,1,1,1)$: Theorem~\ref{thm:n5types};
\item \texttt{run\_\allowbreak typed\_\allowbreak exhaustive.py 6 2,2,2 -{}-part p -{}-parts 12}: Theorem~\ref{thm:n6types};
\item \texttt{run\_\allowbreak rule\_\allowbreak exhaustive.py 5} (floating-point screen) and \texttt{certify\_\allowbreak rule\_\allowbreak exact.py 5}
(exact certificate for every orbit): Theorem~\ref{thm:ps5};
\item \texttt{ascent.py}: adversarial search;
\item \texttt{examples.py}: examples;
\item \texttt{theory\_\allowbreak approx/gkps.py} and \texttt{theory\_\allowbreak approx/run\_\allowbreak gkps2.py} (instance in
\texttt{gkps\_\allowbreak lex\_\allowbreak example.json}): dependent rounding;
\item \texttt{theory\_\allowbreak twoplus/lemma\_\allowbreak proof\_\allowbreak check.py}: symbolic and exact checks for Lemma~\ref{lem:ls};
\item \texttt{theory\_\allowbreak twoplus/construction\_\allowbreak check.py}: end-to-end check of Theorem~\ref{thm:twoplus};
\item \texttt{theory\_\allowbreak complexity/}: numerical sanity checks of the reductions of Section~\ref{sec:complexity}.
\end{itemize}
\endgroup

\end{document}